\definecolor{mycolor}{cmyk}{0.85, 0.21, 0, 0.06}
\DeclareMathOperator*{\bE}{\mathbb{E}}
\DeclareMathOperator*{\bP}{\mathbb{P}}
\renewcommand{\hat}{\widehat}
\renewcommand{\tilde}{\widetilde}
\theoremstyle{definition}
\newtheorem{theorem}{Theorem}[section]
\newtheorem{assumption}{Assumption}[section]
\newtheorem{definition}{Definition}[section]
\newtheorem{lemma}{Lemma}[section]
\newtheorem{algorithm}{Procedure}[section]
\declaretheorem[style=definition]{example}
\declaretheorem[style=definition]{remark}
\numberwithin{equation}{section}
\title{An Effective Treatment Approach to Difference-in-Differences \\
with General Treatment Patterns\thanks{
	The author is very grateful to Tadao Hoshino, Takuya Ishihara, and Ryo Okui for their helpful comments, which significantly improved the paper.
	The author thanks Lei Qin for his excellent research assistance.
	This work was supported by JSPS KAKENHI Grant Number 20K13469.
	The author reports no other competing interests to declare.
	The paper was previously circulated under the title ``Doubly Robust Difference-in-Differences with General Treatment Patterns''.
	}
}
\author{Takahide Yanagi\thanks{
		Graduate School of Economics, Kyoto University, Yoshida Honmachi, Sakyo, Kyoto 606-8501, Japan. Email: \href{mailto:yanagi@econ.kyoto-u.ac.jp}{yanagi@econ.kyoto-u.ac.jp}
		}
}
\date{First version: December 2022 \hspace{0.5in} This version: May 2023}
\begin{document}

\renewcommand\thmcontinues[1]{Continued}
	
\doublespacing

\maketitle

\begin{abstract}
	We consider a general difference-in-differences model in which the treatment variable of interest may be non-binary and its value may change in each period.
	It is generally difficult to estimate treatment parameters defined with the potential outcome given the entire path of treatment adoption, because each treatment path may be experienced by only a small number of observations.
	We propose an alternative approach using the concept of effective treatment, which summarizes the treatment path into an empirically tractable low-dimensional variable, and develop doubly robust identification, estimation, and inference methods.
	We also provide a companion R software package.
	
	\medskip
	
	\noindent \textbf{Keywords}: dynamic treatment effects, effective treatment, event study, panel data, parallel trends
	
	\medskip 
	
	\noindent \textbf{JEL Classification}: C14, C21, C23
\end{abstract}
	
\clearpage

\pagestyle{plain}
	
\section{Introduction} \label{sec:introduction}

Difference-in-differences (DiD) is one of the most popular empirical strategies using panel data.
Under the parallel trends assumption, one can identify meaningful treatment parameters by comparing the time evolution of the outcomes between the treatment and comparison groups.
Most of the recent literature has been devoted to the development of methods in the staggered adoption setting, in which each unit continues to receive a binary treatment after the first treatment receipt (cf. \citealp{de2022two}; \citealp{sun2022linear}; \citealp{callaway2023difference}; \citealp{roth2023s}).

Although staggered adoption is a typical setting in applied research, there are many empirical situations in which this is not the case.\footnote{
	\cite{de2022difference} conduct a survey of the 100 most-cited articles in the American Economic Review from 2015 to 2019.
	They document that 26 papers estimate a two-way fixed effects (TWFE) regression, but only 4 papers fit the standard staggered adoption setting.
}
For example, \cite{vella1998whose} estimate the effect of union membership on wages.
Approximately 28\% of the individuals in their sample change union membership status at least twice, implying non-staggered treatment adoption.
In another example, \cite{deryugina2017fiscal} examines the impact of being hit by a hurricane on the fiscal cost for a US county.
In this scenario, the treatment variable of interest indicates whether (or the number of times) a county was hit by a hurricane in a year.
This empirical situation also does not fit perfectly into the standard setting of staggered adoption because a county damaged by a hurricane this year is not necessarily hit by a hurricane in the following year.

We consider a general DiD model in which the treatment variable of interest may be non-binary and time-varying, in that the treatment realization of each unit may change in each period, which encompasses the staggered setting as a special case.
It is generally difficult to estimate sensible treatment parameters using DiD methods conditional on the entire path of treatment adoption.
This is because there may be a large number of treatment paths, so that each treatment path may be experienced by only a small set of observations.
As a concrete example, in the dataset of \cite{vella1998whose}, there are 95 different paths of union membership status, 53 of which are experienced by only one individual.
In such a case, it is challenging to develop a DiD method that can accurately estimate meaningful treatment parameters.

We propose a DiD method using the concept of \textit{effective treatment}, which converts the entire treatment path into an empirically tractable low-dimensional variable.
In an influential study, \cite{manski2013identification} develops the concept of effective treatment to summarize potentially complicated treatment spillovers among individuals in the presence of social interactions.\footnote{
	The effective treatment is also referred to as \textit{exposure mapping} in the literature on causal inference under interference (cf., \citealp{aronow2021spillover}).
}
Building on this idea, we propose an effective treatment approach to summarize the treatment path over time in the DiD setting, while assuming no interference between units.
In particular, we consider the following effective treatment specifications useful for estimating the instantaneous and dynamic treatment effects: whether some treatment realization has been received at least once so far, the earliest period of receiving some treatment realization so far, and the number of treatment adoptions so far.

Given an effective treatment specification, we set our target parameter as the average treatment effect for \textit{movers} (ATEM), who switch from the comparison group to receiving an effective treatment realization between two periods.
We provide a set of sufficient conditions, including a conditional parallel trends assumption given observed covariates, under which the ATEM is identified by each of outcome regression (OR), inverse probability weighting (IPW), and doubly robust (DR) estimands.
An important feature of our identification analysis is that it allows the chosen effective treatment specification to be ``incorrect'' in that the potential outcome is not well defined with respect to this effective treatment.
This feature is empirically attractive, because researchers generally do not have prior knowledge of the true form of effective treatment. 
Furthermore, as the parallel trends assumption is key to our analysis, we also derive its testable implication, which suggests a ``pre-trends'' test similar to existing ones.

Based on the identification result using the DR estimand, we propose a two-step estimation procedure.
First, we estimate the OR function for stayers and a generalized propensity score (GPS) under parametric assumptions.
Then, we construct an ATEM estimator using the sample counterpart of the identification result.
Desirably, the resulting ATEM estimator has the DR property: it is consistent and asymptotically normally distributed if either the parametric assumptions of the OR function or GPS are correct.
We also consider a multiplier bootstrap inference procedure for constructing uniform confidence bands (UCB).

As an empirical illustration, we apply our methods to the dataset of \cite{vella1998whose}.
Using each effective treatment specification discussed above, we find no statistical evidence to support significant effects of union membership on current and future wages.
Furthermore, the pre-trends testing result is consistent with the parallel trends assumption.

\bigskip 

This study builds on a number of recent contributions, including \cite{de2020two}, \cite{goodman2021difference}, \cite{imai2021use}, \cite{sun2021estimating}, \cite{wooldridge2021two}, \cite{athey2022design}, and \cite{borusyak2023revisiting}. 
Among them, our methods extend in particular \cite{sant2020doubly} and \cite{callaway2021difference}, who develop the DiD methods using the DR estimands in the canonical two-period model and the staggered setting, to our situation with possibly non-binary time-varying treatment.

Another closely related study is \cite{de2022difference}, who propose DiD estimation for certain instantaneous and dynamic treatment effects in a setting similar to ours.
They begin by showing that commonly used TWFE regressions produce biased estimates for treatment effects.
Importantly, this negative result is applicable to our situation, and thus, we cannot recover sensible treatment parameters from such TWFE regressions.
See \cite{de2020two,de2022several}, \cite{goodman2021difference}, and \cite{ishimaru2022what} for related results.
\cite{de2022difference} then propose an event study estimation method by defining the ``event'' as the period in which the unit receives a treatment realization for the first time.
In comparison, we consider general forms of effective treatment, and our methods have an empirically attractive DR property in the same spirits as \cite{sant2020doubly} and \cite{callaway2021difference}.

This study is also related to several recent works that propose DiD methods when the treatment variable of interest is non-binary.
\cite{callaway2021continuous} and \cite{de2022continuous} consider settings in which the treatment variable has continuous realizations.
\cite{de2022several} study DiD estimation in the presence of several binary treatment variables with staggered adoption.
These studies are tailored to specific forms of treatment realizations, whereas our effective treatment approach handles essentially arbitrary treatment realizations.

\bigskip 

The rest of this paper is organized as follows.
In Section \ref{sec:setup}, we introduce the setup and formalize the concept of effective treatment.
Sections \ref{sec:identification} and \ref{sec:estimation} develop the identification and estimation methods.
Section \ref{sec:testable} presents the testable implication for parallel trends.
Section \ref{sec:application} provides the empirical illustration.
The supplementary appendix contains technical proofs and simulation results.
A companion R package is available from the author's website.

\section{Setup and Effective Treatment} \label{sec:setup}

We have panel data $\{ (Y_{it}, D_{it}, X_i): i \in \mathcal{N}, t \in \mathcal{T} \}$, where $Y_{it} \in \mathcal{Y} \subseteq \mathbb{R}$ and $D_{it} \in \mathcal{D} \subseteq \mathbb{R}^{\dim(D)}$ are the outcome and treatment, respectively, for unit $i$ in period $t$, $X_i \in \mathcal{X} \subseteq \mathbb{R}^{\dim(X)}$ is a vector of unit-$i$-specific covariates, and $\mathcal{N} = \{ 1, \dots, N \}$ and $\mathcal{T} = \{ 1, \dots, T \}$ denote the sets of units and periods, respectively.
Some unobserved factors may be correlated with $Y_{it}$ and $D_{it}$, implying the possible treatment endogeneity.

The treatment $D_{it}$ may be binary, discrete, continuous, or multidimensional.
When $D_{it} = 0$, unit $i$ receives no treatment in period $t$ (throughout the paper, for notational simplicity, we write a generic vector of zeros by $0$).
By contrast, $D_{it} = d$ with $d \neq 0$ indicates that unit $i$ receives treatment intensity $d$ in period $t$.
The realization of $D_{it}$ can vary over time in that each unit can move into and out of receiving each treatment intensity in each period.
For each $t$, let $\bm{D}_{it} = (D_{i1}, \dots, D_{it}) \in \mathcal{D}^t$.

Let $Y_{it}(\bm{d}_T)$ denote the potential outcome given the entire treatment path $\bm{D}_{iT} = \bm{d}_T$.
This notation makes explicit that the current outcome may depend on past, current, and future treatment intensity.
For example, $Y_{it}(0)$ denotes the potential outcome in period $t$ if unit $i$ has never been treated in $T$ periods.
By construction, $Y_{it} = Y_{it}(\bm{D}_{iT})$.

\begin{remark}[No anticipation] \label{remark:noanticipate}
	The no-anticipation condition is commonly assumed in the literature, under which the current outcome does not depend on future treatment adoption.
	Then, the potential outcome in period $t$ given $\bm{D}_{it} = \bm{d}_t$ is well defined.
	We will reflect our belief in no anticipation when choosing the functional form of effective treatment.
\end{remark}

In general, it is difficult to estimate treatment parameters defined with respect to $Y_{it}(\bm{d}_T)$.
This is because the number of units that follow a specific treatment path $\bm{d}_T$ may be small in practice, especially when there is much variation in treatment intensity or when the length of the time series is not very small.
For example, when $D_{it}$ has $\bar d$ discrete realizations, there may be $(\bar d)^T$ potential outcomes for each $i$ and $t$.
In such a situation, only a limited number of units would experience $\bm{d}_T$, making it difficult to obtain precise causal estimates.

To circumvent this problem, we adopt the concept of effective treatment, which converts the treatment path $\bm{D}_{iT}$ into an empirically tractable low-dimensional variable $E_{it}$ that researchers believe to be relevant to the outcome in period $t$.
Specifically, for each $t$, consider a user-specified function $E_t: \mathcal{D}^T \to \mathcal{E}_t$, where $\mathcal{E}_t \subset \mathbb{R}^{\dim(E)}$ denotes the range of $E_t$.
Here, the functional form of $E_t$ may change depending on $t$.
We call $E_t$ the \textit{effective treatment function}, the corresponding realization $E_{it} \coloneqq E_t(\bm{D}_{iT})$ the \textit{realized effective treatment} for unit $i$ in period $t$, and the values in $\mathcal{E}_t$ the \textit{effective treatment intensity}.
When $E_{it} = 0$, we consider unit $i$ in period $t$ as the comparison unit under this $E_t$.
We write the set of nonzero realizations of $E_t$ as $\tilde{\mathcal{E}}_t \coloneqq \mathcal{E}_t \setminus \{ 0 \}$.

The following is a set of examples of the functional form of effective treatment.

\begin{example}[name = The \textit{once} specification, label = ex:once]
	If we are interested in the relationship between the outcome and whether the unit has received some treatment intensity at least \textit{once} so far, we may consider
	\begin{align} \label{eq:once}
		E_{it}^{\mathrm{O}} = \bm{1}\{ \bm{D}_{it} \neq 0 \}.
	\end{align}
	The set of nonzero realizations is $\tilde{\mathcal{E}}_t^{\mathrm{O}} = \{ 1 \}$.
\end{example}

\begin{example}[name = The \textit{event} specification, label = ex:event]
	When the period so far in which each unit is treated for the first time is relevant to the outcome, we may use the following specification:
	\begin{align} \label{eq:event}
		E_{it}^{\mathrm{E}} = 
		\begin{cases}
			\min \{ s \le t : D_{is} \neq 0 \} & \text{if $\bm{D}_{it} \neq 0$} \\
			0 &  \text{otherwise}
		\end{cases}
	\end{align}
	We call $E_{it}^{\mathrm{E}}$ and $t - E_{it}^{\mathrm{E}}$ the \textit{event date} and \textit{event time}, respectively (cf. \citealp{miller2023introductory}).
	The set of nonzero realizations in period $t$ is $\tilde{\mathcal{E}}_t^{\mathrm{E}} = \{ 1, \dots, t \}$.
\end{example}

\begin{example}[name = The \textit{number} specification, label = ex:number]
	When the \textit{number} of treatment adoptions so far matters for the outcome, we set
	\begin{align} \label{eq:number}
		E_{it}^{\mathrm{N}} = \sum_{s=1}^t \bm{1}\{ D_{is} \neq 0 \},
	\end{align}
	with the corresponding $\tilde{\mathcal{E}}_t^{\mathrm{N}} = \{ 1, \dots, t \}$.
\end{example}

\begin{remark}[Limited anticipation] \label{remark:limited}
	Under limited anticipation, there is some known $\delta > 0$ such that the outcome in period $t$ is determined from the path of treatment adoptions over $t + \delta$ periods (cf. \citealp{callaway2021difference}).
	We can easily incorporate our belief in it into the functional form of effective treatment.
	For example, \eqref{eq:once} can be modified to $E_{it}^{\mathrm{O}} = \bm{1}\{ \bm{D}_{i,t+\delta} \neq 0 \}$.
\end{remark}

Next, we formalize the correctness of the effective treatment specification.
The following definition builds on the treatment effects literature on cross-unit interference (cf., \citealp{manski2013identification}; \citealp{vazquez2022identification}; \citealp{hoshino2023randomization}).

\begin{definition}
	Consider effective treatment functions $E_t^*: \mathcal{D}^T \to \mathcal{E}_t^*$ and $E_t: \mathcal{D}^T \to \mathcal{E}_t$.
	\begin{enumerate}[(i)]
		\item $E_t^*$ is \textit{correct} if for all $i \in \mathcal{N}$ and $\bm{d}_T, \bm{d}_T' \in \mathcal{D}^{T}$, $E_t^*(\bm{d}_T) = E_t^*(\bm{d}_T')$ implies $Y_{it}(\bm{d}_T) = Y_{it}(\bm{d}_T')$.
		\item $E_t^*$ is \textit{coarser} than $E_t$ if there is a surjective mapping $c_t: \mathcal{E}_t \to \mathcal{E}_t^*$ such that $E_t^*(\bm{d}_T) = c_t(E_t(\bm{d}_T))$ for any $\bm{d}_T \in \mathcal{D}^T$.
		\item A correct $E_t^*$ is \textit{true} if $E_t^*$ is coarser than any other correct specification.
	\end{enumerate}
\end{definition}

There may be several correct specifications.
For example, if one of \eqref{eq:once}--\eqref{eq:number} is correct, so is the identity mapping $E_t(\bm{D}_{iT}) = \bm{D}_{iT}$, which is the ``finest'' correct specification.
As the identity mapping is always correct, the true specification always exists, which may or may not equal the identity mapping.

\begin{example}
	Since $E_{it}^{\mathrm{O}} = \bm{1}\{ E_{it}^{\mathrm{E}} \neq 0 \} = \bm{1}\{ E_{it}^{\mathrm{N}} \neq 0 \}$, the once specification is coarser than the event and number specifications.
	In general, there is no relationship between the coarseness of the event and number specifications.
\end{example}

Throughout the paper, we write the true effective treatment function as $E_t^*: \mathcal{D}^T \to \mathcal{E}_t^*$, which is generally unknown to researchers.
Let $E_{it}^* = E_t^*(\bm{D}_{iT})$ denote the realized true effective treatment for unit $i$ in period $t$.
From the definition of the correct specification, we can define the potential outcome for unit $i$ in period $t$ given $E_{it}^* = e$, that is, $Y_{it}^*(e) = Y_{it}(\bm{d}_T)$ for any $\bm{d}_T \in \mathcal{D}^T$ such that $E_t^*(\bm{d}_T) = e$.
For example, $Y_{it}^*(0)$ denotes the potential outcome when unit $i$ in period $t$ is a comparison unit under the true $E_t^*$.
The individual treatment effect in period $t$ can be written as $Y_{it}^*(e) - Y_{it}^*(0)$, where $e \in \tilde{\mathcal{E}}_t^*$ and $\tilde{\mathcal{E}}_t^* \coloneqq \mathcal{E}_t^* \setminus \{ 0 \}$, which may be heterogeneous across $i$, $t$, and $e$.

Ideally, we would like to adopt the true effective treatment function $E_t^*$ for the DiD analysis to obtain meaningful causal estimates.
In reality, however, we generally have no prior knowledge of the true specification, unless the canonical two-period model or the staggered case.
Furthermore, even if we know a correct specification, such as identity mapping, it may be too ``fine'' to obtain accurate DiD estimates.
Thus, instead of pursing the true specification, we explicitly allow for potential misspecification and aim to obtain relatively precise DiD estimates equipped with sensible causal interpretation.
Similar approaches have been advocated in the literature of causal inference in the presence of treatment spillovers (\citealp{aronow2017estimating}; \citealp{leung2022causal}; \citealp{vazquez2022identification}; \citealp{hoshino2023causal}).

\section{Identification Analysis} \label{sec:identification}

We begin by imposing the following condition:

\begin{assumption}[Effective treatment 1] \label{as:effective}
	Let $E_t: \mathcal{D}^T \to \mathcal{E}_t$ be a user-specified effective treatment function.
	\begin{enumerate}[(i)]
		\item For each $t \in \mathcal{T}$, $\mathcal{E}_t$ is a finite set on $\mathbb{R}^{\dim(E)}$. 
		\item For any $t \in \mathcal{T}$ and $\bm{d}_T \in \mathcal{D}^T$, $E_t(\bm{d}_T) = 0$ implies $E_t^*(\bm{d}_T) = 0$.
	\end{enumerate}
\end{assumption}

Assumption \ref{as:effective}(i) requires the pre-specified $E_t$ to have discrete realizations.
Under Assumption \ref{as:effective}(ii), any comparison unit under the pre-specified $E_t$ is a comparison unit even under the true $E_t^*$.
This should be a mild requirement, because each unit $i$ should be considered as a comparison unit when $\bm{D}_{it} = 0$ or when $\bm{D}_{i,t+\delta} = 0$ with some $\delta > 0$.

\subsection{Target parameter and its causal interpretation} \label{subsec:ATEM}

For periods $t$ and $s$ such that $1 \le s < t$, a user-specified effective treatment function $E_t: \mathcal{D}^T \to \mathcal{E}_t$, and effective treatment intensity $e \in \tilde{\mathcal{E}}_t$, the ATEM is defined by 
\begin{align} \label{eq:ATEM}
	\mathrm{ATEM}(t, s, e)
	\coloneqq \bE[ Y_{it} - Y_{it}^*(0) \mid M_{i,t,s,e} = 1],
\end{align}
where $M_{i,t,s,e} \coloneqq \bm{1}\{ E_{it} = e, E_{is} = 0 \}$ indicates whether unit $i$ is a \textit{mover} who moves from the comparison group to receiving effective treatment intensity $e$ between periods $s$ and $t$.

The next theorem is our first main result, which shows that the causal interpretation of $\mathrm{ATEM}(t, s, e)$ is determined from the relationship between the pre-specified $E_t$ and true $E_t^*$.

\begin{theorem} \label{thm:interpretation}
	Suppose that Assumption \ref{as:effective} holds.
	\begin{enumerate}[(i)]
		\item If $E_t = E_t^*$,
		\begin{align*}
			\mathrm{ATEM}(t, s, e)
			= \mathrm{ATEM}^*(t, s, e)
			\coloneqq \bE[ Y_{it}^*(e) - Y_{it}^*(0) \mid M_{i,t,s,e}^* = 1 ],
		\end{align*}
		where $M_{i,t,s,e}^* \coloneqq \bm{1}\{ E_{it}^* = e, E_{is}^* = 0 \}$.
		\item If $E_t$ is not the true but a correct specification, there is a surjective $c_t: \mathcal{E}_t \to \mathcal{E}_t^*$ such that
		\begin{align*}
			\mathrm{ATEM}(t, s, e)
			= \bE[ Y_{it}^*(c_t(e)) - Y_{it}^*(0) \mid M_{i,t,s,e} = 1 ].
		\end{align*}
		\item Assume that $E_t^*$ has discrete realizations (for exposition purposes).
		If $E_t$ is incorrect,
		\begin{align*}
			\mathrm{ATEM}(t, s, e)
			= \sum_{e^* \in \mathcal{E}_t^*} \bE[ Y_{it}^*(e^*) - Y_{it}^*(0) \mid M_{i,t,s,e^*}^* = 1, M_{i,t,s,e} = 1 ] \cdot \bP(M_{i,t,s,e^*}^* = 1 \mid M_{i,t,s,e} = 1 ).
		\end{align*}
	\end{enumerate}
\end{theorem}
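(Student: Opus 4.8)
The plan rests on two observations. First, because $E_t^*$ is the true and hence correct effective treatment function, the realized outcome always decomposes as $Y_{it} = Y_{it}(\bm{D}_{iT}) = Y_{it}^*(E_{it}^*)$; this identity is the sole bridge between the observed $Y_{it}$ and the potential outcomes $Y_{it}^*(\cdot)$. Second, Assumption \ref{as:effective}(ii) ensures that on the mover event $\{M_{i,t,s,e} = 1\}$, where $E_{is} = 0$, one automatically has $E_{is}^* = 0$; I will use this repeatedly to collapse the restriction ``$E_{is}^* = 0$'' inside any $M_{i,t,s,e^*}^*$ once I have conditioned on $M_{i,t,s,e} = 1$.

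Parts (i) and (ii) then follow by direct substitution. For (i), $E_t = E_t^*$ gives $M_{i,t,s,e} = M_{i,t,s,e}^*$, and on this event $E_{it}^* = e$, so $Y_{it} = Y_{it}^*(e)$; inserting this into \eqref{eq:ATEM} returns $\mathrm{ATEM}^*(t,s,e)$. For (ii), correctness of $E_t$ combined with the definition of the true specification supplies a surjection $c_t$ with $E_{it}^* = c_t(E_{it})$; on $\{M_{i,t,s,e} = 1\}$ we have $E_{it} = e$, hence $Y_{it} = Y_{it}^*(c_t(e))$, which I substitute into the defining conditional expectation. Neither case needs any probabilistic decomposition.

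The substantive case is (iii). Starting from the definition, I would apply the law of total expectation to partition the conditioning event $\{M_{i,t,s,e} = 1\}$ by the value of $E_{it}^*$, which ranges over the finite set $\mathcal{E}_t^*$. Assumption \ref{as:effective}(ii) identifies $\{E_{it}^* = e^*\} \cap \{M_{i,t,s,e} = 1\}$ with $\{M_{i,t,s,e^*}^* = 1\} \cap \{M_{i,t,s,e} = 1\}$, and replacing $Y_{it}$ by $Y_{it}^*(e^*)$ on each cell yields exactly the stated sum; the $e^* = 0$ term drops out since $Y_{it}^*(0) - Y_{it}^*(0) = 0$. The main obstacle is bookkeeping rather than conceptual: the conditional expectations are meaningful only for $e^*$ with $\bP(M_{i,t,s,e^*}^* = 1 \mid M_{i,t,s,e} = 1) > 0$, so the decomposition must be read with the convention that zero-probability cells are omitted, and the only genuine step beyond total expectation is the alignment of the partition with the true-mover indicators via Assumption \ref{as:effective}(ii).
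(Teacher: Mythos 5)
Your proposal is correct and follows essentially the same route as the paper: parts (i) and (ii) by direct substitution of $Y_{it} = Y_{it}^*(E_{it}^*)$ using $M_{i,t,s,e} = M_{i,t,s,e}^*$ and $E_{it}^* = c_t(E_{it})$ respectively, and part (iii) by the law of iterated expectations partitioning $\{M_{i,t,s,e}=1\}$ over the value of $E_{it}^*$, with Assumption \ref{as:effective}(ii) used exactly as you describe to identify $\{E_{it}^*=e^*\}\cap\{M_{i,t,s,e}=1\}$ with $\{M_{i,t,s,e^*}^*=1\}\cap\{M_{i,t,s,e}=1\}$. Your additional remarks about the vanishing $e^*=0$ term and the convention for zero-probability cells are accurate refinements that the paper leaves implicit.
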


Result (i) considers the case where the pre-specified $E_t$ is actually true.
In this case, the ATEM indicates the ATE of receiving the true effective treatment intensity $e$ for movers defined with the true $E_t^*$.

Result (ii) corresponds to the situation where the pre-specified $E_t$ is correct but ``finer'' than the true $E_t^*$.
Then, the ATEM can be interpreted as the ATE of receiving the true effective treatment intensity $c_t(e)$ for movers in terms of the pre-specified $E_t$

Result (iii) allows for the possibility that the pre-specified $E_t$ is neither true nor correct.
Even in this case, the ATEM has causal interpretation as a weighted average of the ATE conditional on both the ``true mover'' ($M_{i,t,s,e^*}^* = 1$) and the ``pre-specified mover'' ($M_{i,t,s,e} = 1$).
The weight function is $\bP(M_{i,t,s,e^*}^* = 1 \mid M_{i,t,s,e} = 1)$, that is, the probability of being the true mover conditional on being the pre-specified mover.
Importantly, this weight function is non-negative and sums up to one so that the ATEM satisfies the so-called \textit{no-sign-reversal} property.
More precisely, $\mathrm{ATEM}(t, s, e)$ is positive (resp. negative) if the individual treatment effect $Y_{it}^*(e^*) - Y_{it}^*(0)$ is almost surely (a.s.) positive (resp. negative) for all $e^* \in \tilde{\mathcal{E}}_t^*$.
Thus, in conjunction with the identification results given in the next subsection, our methods do not suffer from the negative weighting or sign-reversal problem, unlike commonly used TWFE regressions (cf. \citealp{de2020two,de2022difference,de2022several}; \citealp{goodman2021difference}; \citealp{ishimaru2022what}).

\begin{remark}[Test of significance]
	If there is no treatment effect heterogeneity across units and true effective treatment intensity such that $Y_{it}^*(e^*) - Y_{it}^*(0) = \tau_t$ with some non-stochastic $\tau_t$, we have $\mathrm{ATEM}(t, s, e) = \tau_t$ for any $E_t$.
	In particular, this result holds for $\tau_t = 0$.
	Thus, in practice, we can test for the significance of treatment effects by performing statistical inference for $\mathrm{ATEM}(t, s, e)$ even when the pre-specified $E_t$ is incorrect.
\end{remark}

The choice of the effective treatment function $E_t$, periods $t > s$, and effective treatment intensity $e \in \tilde{\mathcal{E}}_t$ determine what $\mathrm{ATEM}(t, s, e)$ measures more specifically.

\begin{example}[continues = ex:once]
	Let $\mathrm{ATEM}^{\mathrm{O}}(t, s, 1) \coloneqq \bE[ Y_{it} - Y_{it}^*(0) \mid M_{i,t,s,1}^{\mathrm{O}} = 1 ]$, where $M_{i,t,s,1}^{\mathrm{O}} \coloneqq \bm{1}\{ E_{it}^{\mathrm{O}} = 1, E_{is}^{\mathrm{O}} = 0 \}$.
	By setting $s = 1$, we have
	\begin{align*}
		\mathrm{ATEM}^{\mathrm{O}}(t, 1, 1)
		= \bE[ Y_{it} - Y_{it}^*(0) \mid D_{i1} = 0, \bm{D}_{it} \neq 0 ],
	\end{align*}
	which captures the composition of the instantaneous and dynamic treatment effects, that is, the effect of receiving some treatment intensity at least once between periods $2$ and $t$ for those who did not receive the treatment in the first period.
\end{example}

\begin{example}[continues = ex:event]
	Let $\mathrm{ATEM}^{\mathrm{E}}(t, s, e) \coloneqq \bE[ Y_{it} - Y_{it}^*(0) \mid M_{i,t,s,e}^{\mathrm{E}} = 1 ]$, where $M_{i,t,s,e}^{\mathrm{E}} \coloneqq \bm{1}\{ E_{it}^{\mathrm{E}} = e, E_{is}^{\mathrm{E}} = 0 \}$ with $e \ge 2$ and $t \ge e$.
	Setting $s = e - 1$, we have
	\begin{align*}
		\mathrm{ATEM}^{\mathrm{E}}(t, e-1, e)
		= \bE \left[ Y_{it} - Y_{it}^*(0) \mid \bm{D}_{i,e-1} = 0, D_{ie} \neq 0 \right].
	\end{align*}
	This indicates the ATE in period $t$ for those who received some treatment intensity for the first time in period $e$.
	In the same manner as in the staggered case, estimating this ATEM over a set of $(t, e)$ allows us to understand the instantaneous and dynamic treatment effects separately.
	Specifically, $\mathrm{ATEM}^{\mathrm{E}}(t,  e-1, e)$ with $t = e$ (resp. with $t > e$) is informative about the instantaneous (resp. dynamic) effect of the first treatment receipt in period $e$.
\end{example}

\begin{example}[continues = ex:number]
	Define $\mathrm{ATEM}^{\mathrm{N}}(t, s, e) \coloneqq \bE[ Y_{it} - Y_{it}^*(0) \mid M_{i,t,s,e}^{\mathrm{N}} = 1 ]$, where $M_{i,t,s,e}^{\mathrm{N}} \coloneqq \bm{1}\{ E_{it}^{\mathrm{N}} = e, E_{is}^{\mathrm{N}} = 0 \}$ with $e \ge 1$ and $t \ge e + 1$.
	When we set $s = 1$, we have
	\begin{align*}
		\mathrm{ATEM}^{\mathrm{N}}(t, 1, e)
		= \bE \left[ Y_{it} - Y_{it}^*(0) \; \middle| \; D_{i1} = 0, \sum_{r=2}^t \bm{1}\{ D_{ir} \neq 0 \} = e \right].
	\end{align*}
	This ATEM recovers the ATE in period $t$ for those who were not treated in the first period and received treatment $e$ times in $t$ periods.
	By estimating this ATEM over a set of $(t, e)$, we can separately assess the instantaneous and dynamic effects of the number of treatment adoptions.
\end{example}

\begin{remark}[Relationship between different specifications]
	Interestingly, $\mathrm{ATEM}^{\mathrm{O}}(t, 1, 1)$ can be obtained from aggregating $\mathrm{ATEM}^{\mathrm{E}}(t, e-1, e)$ or $\mathrm{ATEM}^{\mathrm{N}}(t, 1, e)$ in a certain way.
	See the supplementary appendix for more details.
\end{remark}

\begin{remark}[Practical recommendations]
	It is common practice to perform event study analyses prior to formal DiD estimation (cf. \citealp{miller2023introductory}).
	Given this, it would be natural to take the event specification as a good starting point and to plot $\mathrm{ATEM}^{\mathrm{E}}(t, e-1, e)$ over a set of $(t, e)$, including ``pre-trends'' to assess the validity of parallel trends (see Figure \ref{fig:event} in the empirical illustration section).
	It would then be desirable to further examine treatment effect heterogeneity with other effective treatment functions, such as the number specification. 
	Finally, to obtain more precise estimates, it would be recommended to estimate $\mathrm{ATEM}^{\mathrm{O}}(t, 1, 1)$ and its time-series average as useful aggregation parameters, that is,
	\begin{align} \label{eq:aggregate}
		\mathrm{ATEM}^{\mathrm{A}}
		\coloneqq \frac{1}{(T - 1)} \sum_{t=2}^T \mathrm{ATEM}^{\mathrm{O}}(t, 1, 1).
	\end{align}
\end{remark}

\subsection{Identification of ATEM}

Let $S_{i,t,s} \coloneqq \bm{1}\{ E_{it} = 0, E_{is} = 0 \}$ indicate a \textit{stayer} who remains in the comparison group in periods $s$ and $t$.
Define the GPS by
\begin{align*}
	p_{t,s,e}(X_i) \coloneqq \bP(M_{i,t,s,e} = 1 \mid M_{i,t,s,e} + S_{i,t,s} = 1, X_i),
\end{align*}
which is the probability of being a mover with receiving effective treatment intensity $e \in \tilde{\mathcal{E}}_t$ between periods $s$ and $t$ conditional on $X_i$ and on being either a mover or a stayer between periods $s$ and $t$.
Let $\mathcal{A}$ denote a set of triplets $(t, s, e)$ at which we would like to identify $\mathrm{ATEM}(t, s, e)$.

\begin{assumption}[Overlap] \label{as:overlap}
	For each $(t, s, e) \in \mathcal{A}$, there exists a constant $\varepsilon > 0$ such that $\varepsilon \le \bE[ S_{i,t,s} \mid X_i ] \le 1 - \varepsilon$ and $\varepsilon \le p_{t,s,e}(X_i) \le 1 - \varepsilon$ a.s.
\end{assumption}

\begin{assumption}[Parallel trends] \label{as:parallel}
	For any $t \ge 2$, 
	\begin{align*}
		\bE[ Y_{it}^*(0) - Y_{i,t-1}^*(0) \mid \bm{D}_{iT}, X_i ]
		= \bE[ Y_{it}^*(0) - Y_{i,t-1}^*(0) \mid X_i]
		\quad \text{a.s.}
	\end{align*}
\end{assumption}

Assumption \ref{as:overlap} is an overlap condition under which there are non-negligible proportions of stayers and movers.
This is a common requirement, but clearly restricts the data generating process and functional form of effective treatment.
For instance, it will be violated if there are few treated units in each period or if $E_t$ is time invariant (i.e., $E_{it} = E_{is}$ for all $i, t, s$).

The parallel trends condition in Assumption \ref{as:parallel} is essential for our analysis.
This assumption states that the evolution of the untreated potential outcome is mean-independent of the entire treatment path, conditional on the unit-specific covariates.
Note that this is a condition on the data generating process and does not restrict the specification of effective treatment.
For a better understanding, consider the following specific model:
\begin{equation*}
	Y_{it}^*(0) = X_i^\top \gamma_t + \alpha_i + \eta_t + v_{it},
	\qquad 
	D_{it} = m(D_{i,t-1}, X_i, \alpha_i, \lambda_t, u_{it}),
\end{equation*}
where $\gamma_t$ is a non-stochastic coefficient vector, $m$ represents a treatment choice equation, $\alpha_i$ is a unit FE, $\eta_t$ and $\lambda_t$ are non-stochastic time FEs, and $v_{it}$ and $u_{it}$ are idiosyncratic error terms that vary across both $i$ and $t$.
The essential requirement is that $\alpha_i$ is additively separable from the other components in the untreated potential outcome equation.
If the treatment status in $t = 0$ is deterministic, then $\bm{D}_{iT}$ is determined from $(\alpha_i, \lambda_1, \dots, \lambda_T, u_{i1}, \dots, u_{iT})$ given $X_i$.
Then, Assumption \ref{as:parallel} is fulfilled if $(\alpha_i, u_{i1}, \dots, u_{iT})$ are independent of $(v_{i1}, \dots, v_{iT})$ conditional on $X_i$.
Importantly, this situation allows for the endogenous treatment choice in that $\alpha_i$ can be correlated with both $Y_{it}$ and $D_{it}$.

As the second main result in this paper, the next theorem shows that the ATEM is identifiable from each of OR, IPW, and DR estimands.
For a generic $A_{it}$ and periods $t > s$, denote $\Delta A_{i,t,s} \coloneqq A_{it} - A_{is}$.
Define
\begin{equation} \label{eq:estimand}
	\begin{split}
		\mathrm{ATEM}^{\mathrm{OR}}(t, s, e)
		& \coloneqq \bE \left[ w_{i,t,s,e}^{M} \big( \Delta Y_{i,t,s} - m_{t,s}(X_i) \big) \right], \\
		\mathrm{ATEM}^{\mathrm{IPW}}(t, s, e)
		& \coloneqq \bE \left[ \left( w_{i,t,s,e}^{M}  - w_{i,t,s,e}^{S} \right) \Delta Y_{i,t,s} \right], \\
		\mathrm{ATEM}^{\mathrm{DR}}(t, s, e)
		& \coloneqq \bE \left[ \left( w_{i,t,s,e}^{M}  - w_{i,t,s,e}^{S} \right) \big( \Delta Y_{i,t,s} - m_{t,s}(X_i) \big) \right],
	\end{split}
\end{equation}
where
\begin{align*}
	m_{t,s}(X_i) & \coloneqq \bE[ \Delta Y_{i,t,s} \mid S_{i,t,s} = 1, X_i ],
	& w_{i,t,s,e}^{M} & \coloneqq \frac{M_{i,t,s,e}}{\bE[M_{i,t,s,e}]}, \\
	r_{t,s,e}(X_i) & \coloneqq \frac{p_{t,s,e}(X_i)}{1 - p_{t,s,e}(X_i)},
	& w_{i,t,s,e}^{S} & \coloneqq \frac{ r_{t,s,e}(X_i) S_{i,t,s}}{\bE [ r_{t,s,e}(X_i) S_{i,t,s} ] }.
\end{align*}
In words, $m_{t,s}(X_i)$ is the OR function for stayers, $r_{t,s,e}(X_i)$ is the ratio of GPS, and $w_{i,t,s,e}^{M}$ and $w_{i,t,s,e}^{S}$ are weights for movers and stayers, respectively.

\begin{theorem} \label{thm:ATE}
	Suppose that Assumptions \ref{as:effective}--\ref{as:parallel} hold.
	For each $(t, s, e) \in \mathcal{A}$,
	\begin{align*}
		\mathrm{ATEM}(t, s, e)
		= \mathrm{ATEM}^{\mathrm{OR}}(t, s, e)
		= \mathrm{ATEM}^{\mathrm{IPW}}(t, s, e)
		= \mathrm{ATEM}^{\mathrm{DR}}(t, s, e).
	\end{align*}
\end{theorem}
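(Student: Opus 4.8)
The plan is to reduce all three estimands to the common representation
\[
\bE[\Delta Y_{i,t,s} \mid M_{i,t,s,e} = 1] - \bE[m_{t,s}(X_i) \mid M_{i,t,s,e} = 1],
\]
and to show that this representation equals $\mathrm{ATEM}(t,s,e)$. Two structural facts set up the argument. By Assumption \ref{as:effective}(ii), $E_{is} = 0$ forces $E_{is}^* = 0$, so both movers and stayers satisfy $Y_{is} = Y_{is}^*(0)$; for stayers the additional condition $E_{it} = 0$ gives $Y_{it} = Y_{it}^*(0)$ as well. Hence $\Delta Y_{i,t,s} = Y_{it}^*(0) - Y_{is}^*(0)$ on $\{S_{i,t,s} = 1\}$, while $\Delta Y_{i,t,s} = Y_{it} - Y_{is}^*(0)$ on $\{M_{i,t,s,e} = 1\}$. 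Next I would telescope Assumption \ref{as:parallel} over $r = s+1, \dots, t$ to upgrade parallel trends to arbitrary gaps, $\bE[Y_{it}^*(0) - Y_{is}^*(0) \mid \bm{D}_{iT}, X_i] = \bE[Y_{it}^*(0) - Y_{is}^*(0) \mid X_i]$. Since $S_{i,t,s}$ and $M_{i,t,s,e}$ are measurable functions of $\bm{D}_{iT}$, integrating this against their indicators (the overlap in Assumption \ref{as:overlap} guaranteeing the relevant denominators are positive) yields the crucial consequence that the conditional trend is insensitive to mover/stayer status:
\[
m_{t,s}(X_i) = \bE[Y_{it}^*(0) - Y_{is}^*(0) \mid X_i] = \bE[Y_{it}^*(0) - Y_{is}^*(0) \mid M_{i,t,s,e} = 1, X_i].
\]

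For the OR estimand, the definition of $w_{i,t,s,e}^{M}$ gives directly $\mathrm{ATEM}^{\mathrm{OR}}(t,s,e) = \bE[\Delta Y_{i,t,s} \mid M_{i,t,s,e} = 1] - \bE[m_{t,s}(X_i) \mid M_{i,t,s,e} = 1]$. Substituting the mover decomposition $\Delta Y_{i,t,s} = Y_{it} - Y_{is}^*(0)$ and the trend identity above (after a further layer of iterated expectations over $X_i$ within the event $\{M_{i,t,s,e} = 1\}$) collapses this to $\bE[Y_{it} - Y_{it}^*(0) \mid M_{i,t,s,e} = 1] = \mathrm{ATEM}(t,s,e)$.

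For IPW and DR, the workhorse is a generalized-propensity-score reweighting identity. Writing $P_M(X_i) = \bE[M_{i,t,s,e} \mid X_i]$ and $P_S(X_i) = \bE[S_{i,t,s} \mid X_i]$ for the (mutually exclusive) mover and stayer events, the definition of $p_{t,s,e}$ gives $r_{t,s,e}(X_i) = P_M(X_i)/P_S(X_i)$, hence $r_{t,s,e}(X_i) P_S(X_i) = P_M(X_i)$. By iterated expectations this converts any $X_i$-measurable stayer average into the matching mover average: for generic $g$, $\bE[r_{t,s,e}(X_i) S_{i,t,s} g(X_i)] = \bE[M_{i,t,s,e} g(X_i)]$, and in particular $\bE[r_{t,s,e}(X_i) S_{i,t,s}] = \bE[M_{i,t,s,e}]$. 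Taking $g = m_{t,s}$ shows $\bE[w_{i,t,s,e}^{S} m_{t,s}(X_i)] = \bE[w_{i,t,s,e}^{M} m_{t,s}(X_i)]$, so the regression-adjustment term in $\mathrm{ATEM}^{\mathrm{DR}}$ cancels and $\mathrm{ATEM}^{\mathrm{DR}} = \mathrm{ATEM}^{\mathrm{IPW}}$. Applying the identity instead to the stayer outcome term, and using that on stayers $\bE[\Delta Y_{i,t,s} \mid S_{i,t,s}=1, X_i] = m_{t,s}(X_i)$ by parallel trends, gives $\bE[w_{i,t,s,e}^{S} \Delta Y_{i,t,s}] = \bE[m_{t,s}(X_i) \mid M_{i,t,s,e} = 1]$, so $\mathrm{ATEM}^{\mathrm{IPW}}$ reduces to the same common representation as $\mathrm{ATEM}^{\mathrm{OR}}$.

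The main obstacle is the reweighting identity $r_{t,s,e}(X_i) P_S(X_i) = P_M(X_i)$ and its correct deployment through iterated expectations: it is what lets the (reweighted) stayers' conditional trend stand in for the movers' counterfactual trend, and it must be paired carefully with the observation that parallel trends, although stated for the full path $\bm{D}_{iT}$, survives conditioning on the coarser events $\{S_{i,t,s} = 1\}$ and $\{M_{i,t,s,e} = 1\}$. Once these two ingredients are secured, each of the three equalities is a short algebraic rearrangement.
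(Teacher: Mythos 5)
Your proposal is correct and follows essentially the same route as the paper: the paper's Lemmas \ref{lem:parallel} and \ref{lem:ATE} are exactly your telescoped parallel-trends step plus the observation that Assumption \ref{as:effective}(ii) pins down $Y_{is}=Y_{is}^*(0)$ for movers and stayers, and its treatment of the IPW and DR estimands rests on the same identities $r_{t,s,e}(X_i)=\bE[M_{i,t,s,e}\mid X_i]/\bE[S_{i,t,s}\mid X_i]$ and $\bE[r_{t,s,e}(X_i)S_{i,t,s}]=\bE[M_{i,t,s,e}]$ that you use. The only cosmetic difference is that you funnel all three estimands through one common representation while the paper chains OR $\to$ IPW $\to$ DR, which amounts to the same computations rearranged.
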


\begin{remark}[Identification in the absence of covariates]
	If the identification conditions are fulfilled without covariates, the identification result reduces to
	\begin{align*}
		\mathrm{ATEM}(t, s, e) 
		= \bE[\Delta Y_{i,t,s} \mid M_{i,t,s,e} = 1] - \bE[\Delta Y_{i,t,s} \mid S_{i,t,s} = 1].
	\end{align*}
\end{remark}

In Theorem \ref{thm:ATE}, the three estimands serve the same role from the perspective of identification analysis, but we prefer the DR estimand for estimation purposes.
This is because only it has the DR property.
To see this, we introduce parametric models for the OR function and GPS, say $m_{t,s}(x; \gamma_{t,s})$ and $p_{t,s,e}(x; \pi_{t,s,e})$.
Here, $m_{t,s}(x; \gamma_{t,s})$ and $p_{t,s,e}(x; \pi_{t,s,e})$ are functions known up to the finite-dimensional parameters $\gamma_{t,s} \in \Gamma_{t,s}$ and $\pi_{t,s,e} \in \Pi_{t,s,e}$.
Let $\gamma_{t,s}^*$ and $\pi_{t,s,e}^*$ denote the pseudo-true values.
The DR estimand under these parametric specifications is given by
\begin{align} \label{eq:DRestimand}
	\mathrm{ATEM}^{\mathrm{DR}}(t, s, e; \gamma_{t,s}, \pi_{t,s,e})
	\coloneqq \bE \left[ \left( w_{i,t,s,e}^{M} - w_{i,t,s,e}^{S}(\pi_{t,s,e}) \right) \big( \Delta Y_{i,t,s} - m_{t,s}(X_i; \gamma_{t,s}) \big) \right],
\end{align}
where 
\begin{align} \label{eq:quantities2}
	r_{t,s,e}(X_i; \pi_{t,s,e}) \coloneqq \frac{p_{t,s,e}(X_i; \pi_{t,s,e})}{1 - p_{t,s,e}(X_i; \pi_{t,s,e})},
	\qquad 
	w_{i,t,s,e}^{S}(\pi_{t,s,e}) \coloneqq \frac{ r_{t,s,e}(X_i; \pi_{t,s,e}) S_{i,t,s}}{\bE [ r_{t,s,e}(X_i; \pi_{t,s,e}) S_{i,t,s} ] }.
\end{align}

\begin{assumption}[Parametric model] \label{as:correct}
	For each $(t, s, e) \in \mathcal{A}$, either condition is satisfied.
	\begin{enumerate}[(i)]
		\item There exists a unique $\gamma_{t,s}^* \in \Gamma_{t,s}$ such that $m_{t,s}(X_i) = m_{t,s}(X_i; \gamma_{t,s}^*)$ a.s.
		\item There exists a unique $\pi_{t,s,e}^* \in \Pi_{t,s,e}$ such that $p_{t,s,e}(X_i) = p_{t,s,e}(X_i; \pi_{t,s,e}^*)$ a.s.
	\end{enumerate}
\end{assumption}

\begin{theorem} \label{thm:DR}
	Suppose that Assumptions \ref{as:effective}--\ref{as:parallel} hold.
	Fix arbitrary $(t, s, e) \in \mathcal{A}$.
	\begin{enumerate}[(i)]
		\item Under Assumption \ref{as:correct}(i), $\mathrm{ATEM}(t, s, e) = \mathrm{ATEM}^{\mathrm{DR}}(t, s, e; \gamma_{t,s}^*, \pi_{t,s,e})$ for all $\pi_{t,s,e} \in \Pi_{t,s,e}$.
		\item Under Assumption \ref{as:correct}(ii), $\mathrm{ATEM}(t, s, e) = \mathrm{ATEM}^{\mathrm{DR}}(t, s, e; \gamma_{t,s}, \pi_{t,s,e}^*)$ for all $\gamma_{t,s} \in \Gamma_{t,s}$.
	\end{enumerate}
\end{theorem}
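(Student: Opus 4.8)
The plan is to exploit the bilinear structure of $\mathrm{ATEM}^{\mathrm{DR}}(t,s,e;\gamma_{t,s},\pi_{t,s,e})$ in its two nuisance components, so that in each part the estimand splits into one term that Theorem~\ref{thm:ATE} already identifies with $\mathrm{ATEM}(t,s,e)$ and one ``correction'' term that vanishes when the other nuisance is left arbitrary. Throughout I abbreviate $M_i = M_{i,t,s,e}$, $S_i = S_{i,t,s}$, $\Delta Y_i = \Delta Y_{i,t,s}$, and $m(X_i) = m_{t,s}(X_i)$, and I record the two identities that drive everything. First, since $m(X_i) = \bE[\Delta Y_i \mid S_i = 1, X_i]$, we have $\bE[S_i(\Delta Y_i - m(X_i)) \mid X_i] = 0$ a.s. Second, because $M_i$ and $S_i$ indicate disjoint events and $p_{t,s,e}(X_i) = \bP(M_i = 1 \mid M_i + S_i = 1, X_i) = \bE[M_i \mid X_i]/(\bE[M_i \mid X_i] + \bE[S_i \mid X_i])$, the true ratio obeys $r_{t,s,e}(X_i)\,\bE[S_i \mid X_i] = \bE[M_i \mid X_i]$ a.s.

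For part (i), set $\gamma_{t,s} = \gamma_{t,s}^*$ so that $m_{t,s}(X_i;\gamma_{t,s}^*) = m(X_i)$ a.s. by Assumption~\ref{as:correct}(i), and write
\begin{equation*}
\mathrm{ATEM}^{\mathrm{DR}}(t,s,e;\gamma_{t,s}^*,\pi_{t,s,e}) = \bE\big[w_{i,t,s,e}^M(\Delta Y_i - m(X_i))\big] - \bE\big[w_{i,t,s,e}^S(\pi_{t,s,e})(\Delta Y_i - m(X_i))\big].
\end{equation*}
The first term is exactly $\mathrm{ATEM}^{\mathrm{OR}}(t,s,e)$, which equals $\mathrm{ATEM}(t,s,e)$ by Theorem~\ref{thm:ATE}. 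For the second term, $w_{i,t,s,e}^S(\pi_{t,s,e})$ is $r_{t,s,e}(X_i;\pi_{t,s,e})S_i$ up to a deterministic normalization, so I condition on $X_i$, pull out the $X_i$-measurable factor, and invoke the first identity; the inner conditional expectation is zero a.s., whence the term vanishes for every $\pi_{t,s,e}\in\Pi_{t,s,e}$.

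For part (ii), set $\pi_{t,s,e} = \pi_{t,s,e}^*$ so that $w_{i,t,s,e}^S(\pi_{t,s,e}^*)$ coincides with the true stayer weight $w_{i,t,s,e}^S$ by Assumption~\ref{as:correct}(ii), and split instead as
\begin{equation*}
\mathrm{ATEM}^{\mathrm{DR}}(t,s,e;\gamma_{t,s},\pi_{t,s,e}^*) = \bE\big[(w_{i,t,s,e}^M - w_{i,t,s,e}^S)\Delta Y_i\big] - \bE\big[(w_{i,t,s,e}^M - w_{i,t,s,e}^S)m_{t,s}(X_i;\gamma_{t,s})\big].
\end{equation*}
The first term is $\mathrm{ATEM}^{\mathrm{IPW}}(t,s,e) = \mathrm{ATEM}(t,s,e)$ by Theorem~\ref{thm:ATE}. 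For the second, it suffices to show $\bE[(w_{i,t,s,e}^M - w_{i,t,s,e}^S)g(X_i)] = 0$ for any $X_i$-measurable $g$, in particular $g = m_{t,s}(\cdot;\gamma_{t,s})$. By the second identity, $\bE[r_{t,s,e}(X_i)S_i\, g(X_i)\mid X_i] = \bE[M_i \mid X_i]g(X_i) = \bE[M_i g(X_i)\mid X_i]$ a.s.; taking $g\equiv 1$ also gives $\bE[r_{t,s,e}(X_i)S_i] = \bE[M_i]$, so the two self-normalizing denominators agree and $\bE[w_{i,t,s,e}^S g(X_i)] = \bE[w_{i,t,s,e}^M g(X_i)]$. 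Hence the correction term is zero for every $\gamma_{t,s}\in\Gamma_{t,s}$.

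The computations are routine conditional-expectation manipulations, so I do not anticipate a genuine obstacle; the only points demanding care are bookkeeping. The self-normalizations $\bE[M_i]$ and $\bE[r_{t,s,e}(X_i)S_i]$ must be handled consistently --- in part (ii) the cancellation hinges precisely on $\bE[r_{t,s,e}(X_i)S_i] = \bE[M_i]$, so that the mover and true-stayer weights integrate the same $X_i$-measurable functions. One should confirm that the overlap condition in Assumption~\ref{as:overlap} keeps all denominators bounded away from zero so every expectation is well defined, and it is worth emphasizing the asymmetry: misspecifying the GPS is absorbed because the residual $\Delta Y_i - m(X_i)$ is mean-zero given $(S_i, X_i)$, whereas misspecifying the OR is absorbed because the mover and true-stayer weights balance any function of $X_i$.
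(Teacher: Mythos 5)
Your proof is correct and follows essentially the same route as the paper: part (i) reduces to showing $\bE[w_{i,t,s,e}^{S}(\pi_{t,s,e})(\Delta Y_{i,t,s}-m_{t,s}(X_i))]=0$ via conditioning on $X_i$ and the definition of $m_{t,s}$, and part (ii) reuses the balancing identity $r_{t,s,e}(X_i)\bE[S_{i,t,s}\mid X_i]=\bE[M_{i,t,s,e}\mid X_i]$ from the DR step of Theorem \ref{thm:ATE}, which is exactly the argument the paper invokes by reference. Your write-up merely spells out part (ii) explicitly where the paper defers to the earlier proof.
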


This result, in turn, ensures that the ATEM can be consistently estimated via the DR estimand if either the OR function or GPS is correctly specified.

\section{Estimation and Statistical Inference} \label{sec:estimation}

We consider estimation and inference methods based on the identification result using the DR estimand in \eqref{eq:DRestimand}.
We focus on the following data generating process:

\begin{assumption}[DGP] \label{as:DGP}
	The panel data $\{ (Y_{it}, D_{it}, X_i): i \in \mathcal{N}, t \in \mathcal{T} \}$ are independent and identically distributed (IID) across $i$.
\end{assumption}

\subsection{Estimation procedure} \label{subsec:procedure}

We propose a two-step estimation procedure.
First, we estimate the finite-dimensional parameters $\gamma_{t,s}$ and $\pi_{t,s,e}$ using parametric estimation methods.
For instance, $\gamma_{t,s}$ may be estimated by running a linear regression with the dependent variable $\Delta Y_{i,t,s}$ and the explanatory vector $X_i$ using the observations satisfying $S_{i,t,s} = 1$.
Meanwhile, using observations with $M_{i,t,s,e} + S_{i,t,s} = 1$, we can estimate $\pi_{t,s,e}$ using the parametric maximum likelihood method, such as logit or probit estimation, in which the binary response variable is $M_{i,t,s,e}$ and the explanatory vector is $X_i$.
Let $\hat \theta_{t,s,e} \coloneqq (\hat \gamma_{t,s}^\top, \hat \pi_{t,s,e}^\top)^\top$ denote the vector of the first-step estimators.

In the second step, we compute the sample analog of the DR estimand in \eqref{eq:DRestimand} as follows:
\begin{align} \label{eq:DRestimator}
	\widehat{\mathrm{ATEM}}(t, s, e)
	\coloneqq {\bE}_N \left[ \big( \hat w_{i,t,s,e}^{M}  - \hat w_{i,t,s,e}^{S}(\hat \pi_{t,s,e}) \big) \big( \Delta Y_{i,t,s} - m_{t,s}(X_i; \hat \gamma_{t,s}) \big) \right],
\end{align}
where ${\bE}_N[A_{it}] = N^{-1} \sum_{i=1}^N A_{it}$ for generic $A_{it}$ and
\begin{align*}
	\hat w_{i,t,s,e}^{M} 
	\coloneqq \frac{M_{i,t,s,e}}{\bE_N[M_{i,t,s,e}]},
	\qquad 
	\hat w_{i,t,s,e}^{S}(\hat \pi_{t,s,e})
	\coloneqq \frac{ r_{t,s,e}(X_i; \hat \pi_{t,s,e}) S_{i,t,s}}{\bE_N [ r_{t,s,e}(X_i; \hat \pi_{t,s,e}) S_{i,t,s} ] }.
\end{align*}

\begin{remark}[Overview of the asymptotic properties]
	When $N$ tends to infinity and $T$ is fixed, under a set of mild regularity conditions, we can show that the proposed ATEM estimator is $1/\sqrt{N}$-consistent and asymptotically normally distributed.
	More precisely, we can prove the following influence function representation:
	\begin{align*}
		\sqrt{N} \left( \widehat{\mathrm{ATEM}}(t, s, e) - \mathrm{ATEM}(t, s, e) \right)
		= \frac{1}{\sqrt{N}} \sum_{i=1}^N \psi_{i,t,s,e}( \theta_{t,s,e}^*) + o_{\bP}(1)
		\quad \text{for each $(t, s, e) \in \mathcal{A}$},
	\end{align*}
	where $\theta_{t,s,e}^* \coloneqq (\gamma_{t,s}^{*\top}, \pi_{t,s,e}^{*\top})^\top$ denotes the pseudo-true parameter vector in the first-stage parametric estimation and $\psi_{i,t,s,e}( \theta_{t,s,e}^*)$ is an influence function, which has a zero mean and a finite variance.
	See the supplementary appendix for the formal asymptotic analysis.
\end{remark}

\subsection{Multiplier bootstrap inference} \label{subsec:bootstrap}

Let $\hat \psi_{i,t,s,e}(\hat \theta_{t,s,e})$ be an estimator of $\psi_{i,t,s,e}(\theta_{t,s,e}^*)$ obtained by replacing the pseudo-true parameter value $\theta_{t,s,e}^*$ and population mean $\bE$ with the first-step estimator $\hat \theta_{t,s,e}$ and sample mean ${\bE}_N$, respectively.
Let $\{ V_i^{\star} \}_{i=1}^N$ be a set of IID bootstrap weights independent of the original sample such that $\bE[V_i^{\star}] = 0$, $\bE|V_i^{\star}|^2 = 1$, and $\bE|V_i^{\star}|^{2 + \varepsilon} < \infty$ for some $\varepsilon > 0$.
A common choice is \citeauthor{mammen1993bootstrap}'s (\citeyear{mammen1993bootstrap}) weight, such that $\bP( V_i^{\star} = 1 - \kappa ) = \kappa / \sqrt{5}$ and $\bP( V_i^{\star} = \kappa ) = 1 - \kappa / \sqrt{5}$ with $\kappa = (\sqrt{5} + 1) / 2$.
The bootstrap analog of the ATEM estimator is defined by
\begin{align*}
	\widehat{\mathrm{ATEM}}^{\star}(t, s, e)
	\coloneqq \widehat{\mathrm{ATEM}}(t, s, e) + {\bE}_N \left[ V_i^{\star} \cdot \hat \psi_{i,t,s,e}(\hat \theta_{t,s,e}) \right].
\end{align*}

We consider the following multiplier bootstrap inference procedure, which is similar to those of \cite{belloni2017program} and \cite{callaway2021difference}.

\begin{algorithm}[Multiplier bootstrap] \label{algorithm:bootstrap}
	\hfil 
	\begin{enumerate}
		\item Generate $\{ V_i^{\star} \}_{i=1}^N$.
		Compute $\widehat{\mathrm{ATEM}}^{\star}(t, s, e)$ and $\hat R^{\star}(t, s, e) \coloneqq \widehat{\mathrm{ATEM}}^{\star}(t, s, e) - \widehat{\mathrm{ATEM}}(t, s, e)$.
		Repeat this step $B$ times.
		\item Construct the bootstrap standard error for $\widehat{\mathrm{ATEM}}(t, s, e)$ by the empirical interquartile range of $\hat R^{\star}(t, s, e)$ rescaled with the interquartile range of $\mathrm{Normal}(0, 1)$:
		\begin{align*}
			\mathrm{SE}(t, s, e) 
			\coloneqq \frac{q_{0.75}(t, s, e) - q_{0.25}(t, s, e)}{z_{0.75} - z_{0.25}},
		\end{align*}
		where $q_{\alpha}(t, s, e)$ is the empirical $\alpha$-quantile of $\hat R^{\star}(t, s, e)$ and $z_{\alpha}$ is the $\alpha$-quantile of $\mathrm{Normal}(0, 1)$.
		\item Construct the $1 - \alpha$ UCB for $\mathrm{ATEM}(t, s, e)$ as follows:
		\begin{align*}
			\widehat{\mathrm{CI}}_{1-\alpha}(t, s, e)
			\coloneqq \left[ \widehat{\mathrm{ATEM}}(t, s, e) - \hat c_{1-\alpha} \cdot \mathrm{SE}(t, s, e), \quad \widehat{\mathrm{ATEM}}(t, s, e) + \hat c_{1-\alpha} \cdot \mathrm{SE}(t, s, e) \right],
		\end{align*}
		where the bootstrap critical value $\hat c_{1 - \alpha}$ is obtained by the empirical $(1 - \alpha)$-quantile of the maximum of the bootstrapped $t$ test statistics defined as $\text{max-t-stat}_{\mathcal{A}}^{\star} \coloneqq \max_{(t, s, e) \in \mathcal{A}} |\hat R^{\star}(t, s, e)| /  \mathrm{SE}(t, s, e)$.
	\end{enumerate}
\end{algorithm}

It can be shown that the bootstrap distribution mimics the asymptotic distribution of the ATEM estimator.
The proof is analogous to Theorem 3 of \cite{callaway2021difference} and thus is omitted here. 
This result, in turn, ensures the asymptotic validity of the multiplier bootstrap inference procedure.
More precisely, the coverage error of the $1 - \alpha$ UCB vanishes asymptotically:
\begin{align*}
	\lim_{N \to \infty} \bP \left( \mathrm{ATEM}(t, s, e) \in \widehat{\mathrm{CI}}_{1-\alpha}(t, s, e) \quad \text{for all $(t, s, e) \in \mathcal{A}$} \right) = 1 - \alpha.
\end{align*}

\section{Testable Implication for Parallel Trends: Pre-trends Test} \label{sec:testable}

To begin with, we add a mild requirement to the pre-specified $E_t$.
The following assumption states that a comparison unit in the current period should be a comparison unit even in past periods.

\begin{assumption}[Effective treatment 2] \label{as:nondecreasing}
	For any $t \ge 2$, $E_{it} = 0$ implies $E_{i,t-1} = 0$.
\end{assumption}

The basic idea is as follows.
For a period $r$ such that $2 \le r \le s < t$, we denote $\Delta Y_{ir} \coloneqq Y_{ir} - Y_{i,r-1}$.
Under the parallel trends condition in Assumption \ref{as:parallel}, we can show that
\begin{equation} \label{eq:pre-trends}
	\begin{split}
		\bE[ \Delta Y_{ir} \mid M_{i,t,s,e} = 1, X_i]
		& = \bE[ \Delta Y_{ir} \mid S_{i,t,s} = 1, X_i]
		\quad \text{a.s.}
	\end{split}
\end{equation}
Thus, it is necessary for parallel trends that, in any ``pre-effective treatment'' period $r$, the outcome evolution for movers should follow the same path as the outcome evolution for stayers.
If we find statistical evidence against \eqref{eq:pre-trends}, the parallel trends assumption would be implausible.

The next theorem formalizes this idea using the DR estimand.
For each $(t, s, e) \in \mathcal{A}$ and $r \ge 2$, define
\begin{align*}
	\mathrm{ATEM}(t, s, r, e)
	& \coloneqq \bE[ Y_{ir} - Y_{ir}^*(0) \mid M_{i,t,s,e} = 1]
\end{align*}	
and
\begin{align*}
	\mathrm{ATEM}^{\mathrm{DR}}(t, s, r, e)
	& \coloneqq \bE \left[ \left( w_{i,t,s,e}^{M} - w_{i,t,s,e}^{S} \right) \big( \Delta Y_{ir} - m_{t,s,r}(X_i) \big) \right],
\end{align*}
where $m_{t,s,r}(X_i) \coloneqq \bE[ \Delta Y_{ir} \mid S_{i,t,s} = 1, X_i ]$ is the OR function in period $r$ for stayers between periods $s$ and $t$.
Note that $\mathrm{ATEM}(t, s, r, e)$ for $t = r$ reduces to $\mathrm{ATEM}(t, s, e)$ in \eqref{eq:ATEM}.

\begin{theorem} \label{thm:testable}
	Suppose that Assumptions \ref{as:effective}, \ref{as:overlap}, and \ref{as:nondecreasing} hold.
	Under Assumption \ref{as:parallel},
	\begin{align} \label{eq:testable}
		\mathrm{ATEM}(t, s, r, e) 
		= \mathrm{ATEM}^{\mathrm{DR}}(t, s, r, e) 
		= 0
	\end{align}
	for all $(t, s, e) \in \mathcal{A}$ and $r$ such that $2 \le r \le s < t$.
\end{theorem}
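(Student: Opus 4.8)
The plan is to prove the two equalities in \eqref{eq:testable} separately, exploiting the fact that $r$ is a pre-effective-treatment period, i.e., $r \le s < t$. The key structural observation, which I would establish first, is that both movers and stayers behave as comparison units throughout periods $1, \dots, s$. Precisely, fix a unit $i$ with either $M_{i,t,s,e} = 1$ or $S_{i,t,s} = 1$; in both cases $E_{is} = 0$. Applying Assumption \ref{as:nondecreasing} inductively (which is legitimate since $2 \le r \le s$) yields $E_{ir} = 0$ and $E_{i,r-1} = 0$. Assumption \ref{as:effective}(ii) then gives $E_{ir}^* = 0$ and $E_{i,r-1}^* = 0$, so by the definition of the true potential outcome, $Y_{ir} = Y_{ir}^*(0)$ and $Y_{i,r-1} = Y_{i,r-1}^*(0)$, and hence $\Delta Y_{ir} = \Delta Y_{ir}^*(0)$ on both the mover and stayer events.

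Given this, the first equality $\mathrm{ATEM}(t, s, r, e) = 0$ is immediate: on $\{ M_{i,t,s,e} = 1 \}$ we have $Y_{ir} - Y_{ir}^*(0) = 0$, so the conditional expectation defining $\mathrm{ATEM}(t, s, r, e)$ vanishes. For the DR estimand, I would decompose $\mathrm{ATEM}^{\mathrm{DR}}(t, s, r, e)$ into its mover and stayer contributions, mirroring the proof of Theorem \ref{thm:ATE} with $Y_{it}$ replaced by $Y_{ir}$. The stayer term $\bE[ w_{i,t,s,e}^{S} ( \Delta Y_{ir} - m_{t,s,r}(X_i) ) ]$ vanishes by iterated expectations: conditioning on $(S_{i,t,s} = 1, X_i)$ and using the definition $m_{t,s,r}(X_i) = \bE[ \Delta Y_{ir} \mid S_{i,t,s} = 1, X_i ]$ leaves a zero residual. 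Here Assumption \ref{as:overlap} guarantees that the weights and the ratio $r_{t,s,e}$ are well-defined with nonzero, bounded denominators.

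For the mover term, the goal is to derive the pre-trends identity \eqref{eq:pre-trends}. Combining the comparison-unit property above with parallel trends (Assumption \ref{as:parallel})---and noting that $M_{i,t,s,e}$ and $S_{i,t,s}$ are measurable functions of $\bm{D}_{iT}$, so that conditioning on $\bm{D}_{iT}$ refines the mover and stayer events---gives $\bE[ \Delta Y_{ir} \mid M_{i,t,s,e} = 1, X_i ] = \bE[ \Delta Y_{ir}^*(0) \mid X_i ]$ and $m_{t,s,r}(X_i) = \bE[ \Delta Y_{ir}^*(0) \mid X_i ]$ almost surely, whence $\bE[ \Delta Y_{ir} \mid M_{i,t,s,e} = 1, X_i ] = m_{t,s,r}(X_i)$. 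Integrating over $X_i$ conditional on $\{ M_{i,t,s,e} = 1 \}$ then yields $\bE[ \Delta Y_{ir} \mid M_{i,t,s,e} = 1 ] = \bE[ m_{t,s,r}(X_i) \mid M_{i,t,s,e} = 1 ]$, so the mover term $\bE[ \Delta Y_{ir} \mid M_{i,t,s,e} = 1 ] - \bE[ m_{t,s,r}(X_i) \mid M_{i,t,s,e} = 1 ]$ also vanishes, giving $\mathrm{ATEM}^{\mathrm{DR}}(t, s, r, e) = 0$.

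The step I expect to require the most care is the transfer of parallel trends from the potential-outcome difference $\Delta Y_{ir}^*(0)$ to the observed difference $\Delta Y_{ir}$. This hinges on the comparison-unit identity $\Delta Y_{ir} = \Delta Y_{ir}^*(0)$ holding on both conditioning events, which in turn rests on the inductive use of Assumption \ref{as:nondecreasing} together with Assumption \ref{as:effective}(ii); once that identity is in hand, the remaining manipulations are the same iterated-expectation computations used in the proof of Theorem \ref{thm:ATE}.
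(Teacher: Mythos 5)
Your proposal is correct and uses essentially the same ingredients as the paper's proof: the inductive use of Assumption \ref{as:nondecreasing} together with Assumption \ref{as:effective}(ii) to conclude that $Y_{ir}=Y_{ir}^*(0)$ and $Y_{i,r-1}=Y_{i,r-1}^*(0)$ on both the mover and stayer events (giving $\mathrm{ATEM}(t,s,r,e)=0$ immediately), and the iterated-expectation/parallel-trends computations from Theorem \ref{thm:ATE} applied with $\Delta Y_{ir}$ in place of $\Delta Y_{i,t,s}$. The only cosmetic difference is that you show $\mathrm{ATEM}(t,s,r,e)=0$ and $\mathrm{ATEM}^{\mathrm{DR}}(t,s,r,e)=0$ separately, whereas the paper first establishes their equality and then shows the former is zero; the underlying calculations are the same.
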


This testable implication suggests the following procedure for assessing parallel trends.
First, we construct the $1 - \alpha$ UCB for $\mathrm{ATEM}(t, s, r, e)$ using the multiplier bootstrap.
Then, the parallel trends assumption should be refuted if a number of the resulting confidence bands exclude zeros.

It should be cautious that ``passing'' this type of test does not necessarily ensure the validity of parallel trends.
This is because the testable implication in \eqref{eq:testable} is merely a necessary condition for parallel trends.
Indeed, recent studies point out that this type of test often has low power against the violation of parallel trends (cf., \citealp{roth2022pretest}).

\section{Empirical Illustration: Union Wage Premium} \label{sec:application}

\cite{vella1998whose} find a positive effect of union membership status on wages by estimating a dynamic selection model.
In the DiD literature, \cite{de2020two} revisit the same dataset and demonstrate that current union membership may not significantly influence current wages.
Here, we complement these previous results with the proposed methods, which enable the estimation of both the instantaneous and dynamic treatment effects.

The data record 545 full-time working males taken from the National Longitudinal Survey (Youth Sample) for 1980--1987.\footnote{
	The data set is available through the Inter-university Consortium for Political and Social Research (\citealp{de2020data}).
}
The binary treatment $D_{it}$ indicates the union membership status of individual $i$ in year $t$, reflecting whether $i$ is covered by a collective bargaining agreement.
The outcome $Y_{it}$ is the logarithm of the hourly wages for individual $i$ in year $t$.
As the set of unit-specific covariates $X_i$, we consider race, years of schooling, and years of labor market experience in the first period.

Figures \ref{fig:once}, \ref{fig:event}, and \ref{fig:number} present the estimates of the three ATEM parameters in Examples \ref{ex:once}, \ref{ex:event}, and \ref{ex:number} with the corresponding 95\% UCBs obtained from 5,000 bootstrap replications using \citeauthor{mammen1993bootstrap}'s (\citeyear{mammen1993bootstrap}) weight.
For the first-step estimation, we estimate the OR function for stayers and GPS using the method of least squares and logit estimation, respectively.
We can see that there are both positive and negative estimates, but many of them are small in the absolute sense.
Indeed, all the resulting UCBs include zeros, that is, we find no statistical evidence for current and future union wage premiums.
We also find a statistically insignificant estimate of the aggregation parameter $\mathrm{ATEM}^{\mathrm{A}}$ defined in \eqref{eq:aggregate}.
Specifically, the estimate is $0.041$ and the 95\% confidence interval is $[-0.076, 0.159]$.

The pre-trends testing result for the event specification is shown in Figure \ref{fig:event}.
No UCBs for ``pre-effective treatment'' periods exclude zeros, which is consistent with the parallel trends in Assumption \ref{as:parallel}.

Accordingly, our DiD analysis suggests that union membership might not substantially improve current and future wages, which is in line with the finding of \cite{de2020two}.

\clearpage

\bibliography{ref.bib}

\clearpage

\begin{figure}[p]
	\centering
	\includegraphics[width = \textwidth, bb = 0 0 720 720]{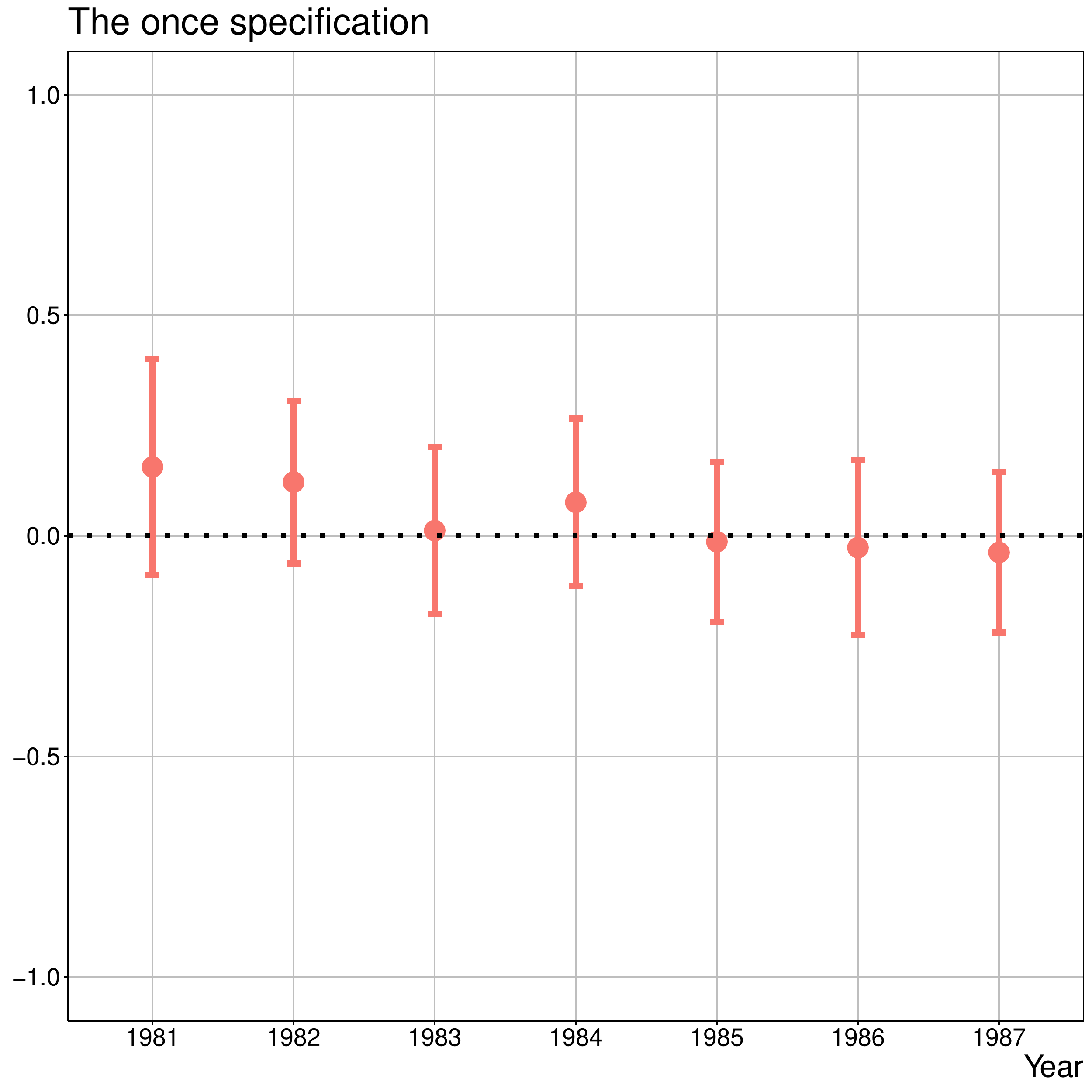}
	\caption{The empirical results: $\mathrm{ATEM}^{\mathrm{O}}(t, 1, 1) = \bE[ Y_{it} - Y_{it}^*(0) \mid M_{i,t,1,e}^{\mathrm{O}} = 1]$}
	\label{fig:once}
	\bigskip 
	\begin{flushleft}
		Note: Circles and bars indicate ATEM estimates and corresponding 95\% UCBs, respectively.
	\end{flushleft}
\end{figure}

\clearpage

\begin{figure}[p]
	\centering
	\includegraphics[width = \textwidth, bb = 0 0 1440 1440]{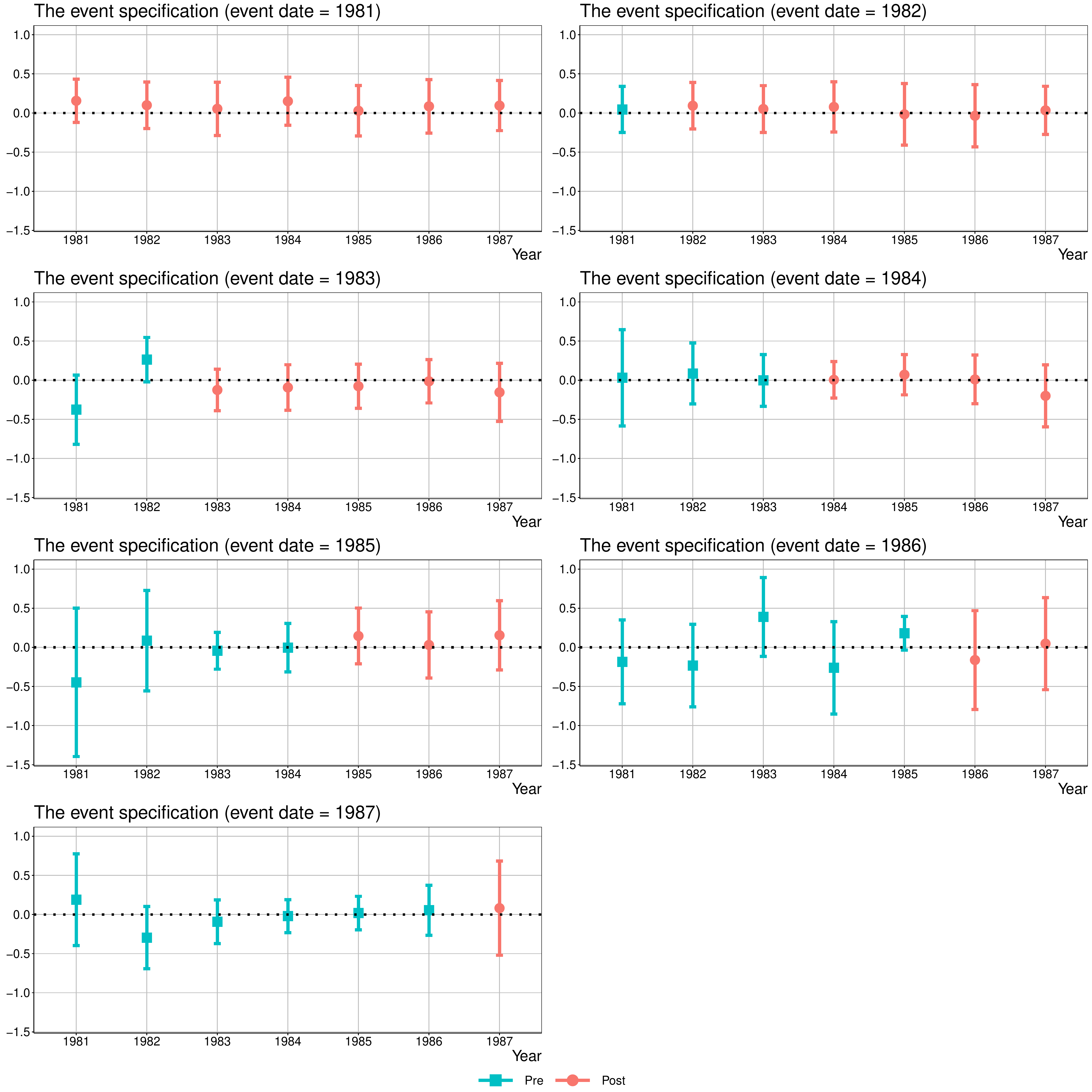}
	\caption{The empirical results: $\mathrm{ATEM}^{\mathrm{E}}(t, e-1, e) = \bE[ Y_{it} - Y_{it}^*(0) \mid M_{i,t,e-1,e}^{\mathrm{E}} = 1]$ and $\mathrm{ATEM}^{\mathrm{E}}(t, e-1, r, e) = \bE[ Y_{ir} - Y_{ir}^*(0) \mid M_{i,t,e-1,e}^{\mathrm{E}} = 1]$}
	\label{fig:event}
	\bigskip 
	\begin{flushleft}
		Note: Circles (resp. boxes) and bars indicate ATEM estimates and corresponding 95\% UCBs for ``post-effective treatment'' (resp. ``pre-effective treatment'') periods.
	\end{flushleft}
\end{figure}

\clearpage

\begin{figure}[p]
	\centering
	\includegraphics[width = \textwidth, bb = 0 0 1440 1440]{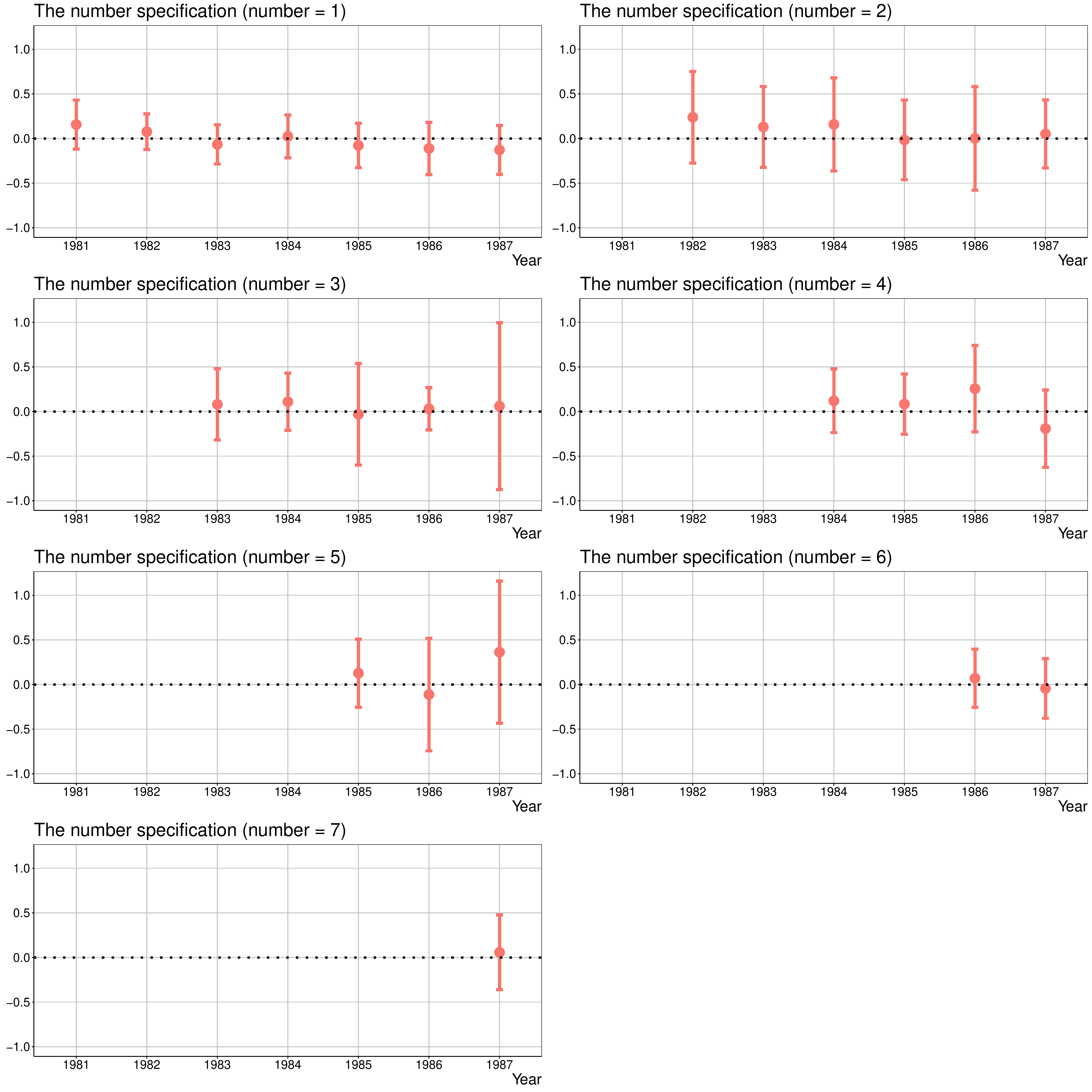}
	\caption{The empirical results: $\mathrm{ATEM}^{\mathrm{N}}(t, 1, e) = \bE[ Y_{it} - Y_{it}^*(0) \mid M_{i,t,1,e}^{\mathrm{N}} = 1]$}
	\label{fig:number}
	\bigskip 
	\begin{flushleft}
		Note: Circles and bars indicate ATEM estimates and corresponding 95\% UCBs, respectively.
	\end{flushleft}
\end{figure}

\clearpage 

\begin{center}
	{\Large Online Appendix for: \\
	An Effective Treatment Approach to Difference-in-Differences with General Treatment Patterns}
	
	\bigskip
	
	{\large Takahide Yanagi$^\dagger$}
	\bigskip
	
	$^\dagger$ Graduate School of Economics, Kyoto University.
\end{center}

\appendix

\renewcommand{\thepage}{S\arabic{page}}
\renewcommand{\thetable}{S\arabic{table}}
\renewcommand{\thefigure}{S\arabic{figure}}
\setcounter{page}{1}
\setcounter{table}{0}
\setcounter{figure}{0}

\begin{abstract}
	Appendices \ref{sec:proof} and \ref{sec:lemma} contain the proofs of the theorems and several lemmas.
	The discussion for aggregation parameters and the formal asymptotic investigations are given in Appendix \ref{sec:additional}.
	In Appendix \ref{sec:simulation}, we report the results of Monte Carlo simulations.
\end{abstract}

\section{Proofs} \label{sec:proof}

\subsection{Proof of Theorem \ref{thm:interpretation}}

To begin with, recall the definition of the ATEM parameter:
\begin{align*}
	\mathrm{ATEM}(t, s, e)
	= \bE[ Y_{it} - Y_{it}^*(0) \mid M_{i,t,s,e} = 1]
	\quad
	\text{with $M_{i,t,s,e} = \bm{1}\{ E_{it} = e, E_{is} = 0 \}$.} 
\end{align*}

\paragraph{Result (i).}

When the pre-specified $E_t$ is actually true so that $E_t = E_t^*$, we have $M_{i,t,s,e} = M_{i,t,s,e}^* = \bm{1}\{ E_{it}^* = e, E_{is}^* = 0 \}$.
Since $Y_{it} = Y_{it}^*(e)$ conditional on $E_{it}^* = e$, it holds that
\begin{align*}
	\mathrm{ATEM}(t, s, e)
	= \bE [ Y_{it}^*(e) - Y_{it}^*(0) \mid M_{i,t,s,e}^* = 1 ]
	= \mathrm{ATEM}^*(t, s, e).
\end{align*}

\paragraph{Result (ii).}

When the pre-specified $E_t: \mathcal{D}^T \to \mathcal{E}_t$ is not true but correct, the true $E_t^*: \mathcal{D}^T \to \mathcal{E}_t^*$ is coarser than $E_t$, and there is a surjective mapping $c_t: \mathcal{E}_t \to \mathcal{E}_t^*$ such that $E_{it}^* = c_t(E_{it})$.
Noting that $Y_{it} = Y_{it}^*(E_{it}^*)$, we have
\begin{align*}
	\mathrm{ATEM}(t, s, e)
	& = \bE [ Y_{it}^*(E_{it}^*) - Y_{it}^*(0) \mid M_{i,t,s,e} = 1 ] \\
	& = \bE [ Y_{it}^*(c_t(E_{it})) - Y_{it}^*(0) \mid M_{i,t,s,e} = 1 ] \\
	& = \bE [ Y_{it}^*(c_t(e)) - Y_{it}^*(0) \mid M_{i,t,s,e} = 1 ].
\end{align*}

\paragraph{Result (iii).}

Using the law of iterated expectations, we have
\begin{align*}
	& \mathrm{ATEM}(t, s, e) \\
	& = \bE[ Y_{it} - Y_{it}^*(0) \mid E_{it} = e, E_{is} = 0] \\
	& = \bE \Big[ \bE[ Y_{it} - Y_{it}^*(0) \mid E_{it}^*, E_{is}^*, E_{it} = e, E_{is} = 0 ] \; \Big| \; E_{it} = e, E_{is} = 0 \Big] \\
	& = \sum_{e^* \in \mathcal{E}_t^*} \bE[ Y_{it}^*(e^*) - Y_{it}^*(0) \mid E_{it}^* = e^*, E_{is}^* = 0, E_{it} = e, E_{is} = 0 ] \cdot \bP( E_{it}^* = e^*, E_{is}^* = 0 \mid E_{it} = e, E_{is} = 0 ) \\
	& = \sum_{e^* \in \mathcal{E}_t^*} \bE[ Y_{it}^*(e^*) - Y_{it}^*(0) \mid M_{i,t,s,e^*}^* = 1, M_{i,t,s,e} = 1 ] \cdot \bP( M_{i,t,s,e^*}^* = 1 \mid M_{i,t,s,e} = 1 ),
\end{align*}
where we used Assumption \ref{as:effective}(ii) and the definitions of $M_{i,t,s,e^*}^* = \bm{1}\{ E_{it}^* = e^*, E_{is}^* = 0 \}$ and $M_{i,t,s,e} = \bm{1}\{ E_{it} = e, E_{is} = 0 \}$ in the third and last equalities, respectively.

\qed

\subsection{Proof of Theorem \ref{thm:ATE}}

\paragraph{The OR estimand.}
Let $\mathrm{ATEM}_{X_i}(t, s, e) \coloneqq \bE[ Y_{it} - Y_{it}^*(0) \mid M_{i,t,s,e} = 1, X_i ]$.
Using the law of iterated expectations and Lemma \ref{lem:ATE}, we have
\begin{align*}
	\mathrm{ATEM}(t, s, e)
	& = \bE[ \mathrm{ATEM}_{X_i}(t, s, e) \mid M_{i,t,s,e} = 1 ] \\
	& = \bE \Big[ \bE[ \Delta Y_{i,t,s} \mid M_{i,t,s,e} = 1, X_i ] - \bE[ \Delta Y_{i,t,s} \mid S_{i,t,s} = 1, X_i ] \; \Big| \; M_{i,t,s,e} = 1 \Big] \\
	& = \bE[ \Delta Y_{i,t,s} - m_{t,s}(X_i) \mid M_{i,t,s,e} = 1 ] \\
	& = \bE \left[ w_{i,t,s,e}^{M} \big( \Delta Y_{i,t,s} - m_{t,s}(X_i) \big) \right] \\
	& = \mathrm{ATEM}^{\mathrm{OR}}(t, s, e).
\end{align*}

\paragraph{The IPW estimand.}
Given the first result, we can obtain the desired result if we show that $\bE [ w_{i,t,s,e}^{S} \Delta Y_{i,t,s} ] = \bE [ w_{i,t,s,e}^{M} \cdot m_{t,s}(X_i) ]$.
First, it is straightforward to see that
\begin{equation} \label{eq:propensity}
	\begin{split}
		p_{t,s,e}(X_i) 
		& = \frac{\bP( M_{i,t,s,e} = 1 \mid X_i )}{\bP(M_{i,t,s,e} = 1 \lor S_{i,t,s} = 1 \mid X_i)}
		= \frac{\bE[M_{i,t,s,e} \mid X_i]}{\bE[M_{i,t,s,e} \mid X_i] + \bE[S_{i,t,s} \mid X_i]},\\
		1 - p_{t,s,e}(X_i) 
		& = \frac{\bE[S_{i,t,s} \mid X_i]}{\bE[M_{i,t,s,e} \mid X_i] + \bE[S_{i,t,s} \mid X_i]}, \\
		r_{t,s,e}(X_i) 
		& = \frac{ p_{t,s,e}(X_i) }{ 1 - p_{t,s,e}(X_i) }
		= \frac{\bE[M_{i,t,s,e} \mid X_i]}{\bE[S_{i,t,s} \mid X_i]}.
	\end{split}
\end{equation}
Then, the law of iterated expectations leads to
\begin{align} \label{eq:taumean}
	\bE [ r_{t,s,e}(X_i) S_{i,t,s} ]
	= \bE \left[ \frac{\bE[M_{i,t,s,e} \mid X_i]}{\bE[S_{i,t,s} \mid X_i]} S_{i,t,s} \right]
	= \bE[M_{i,t,s,e}].
\end{align}
Thus, using \eqref{eq:propensity}, \eqref{eq:taumean}, and the law of iterated expectations, we have
\begin{align*}
	\bE[ w_{i,t,s,e}^{S} \Delta Y_{i,t,s} ]
	& = \bE\left[ \frac{r_{t,s,e}(X_i) S_{i,t,s}}{\bE[ r_{t,s,e}(X_i) S_{i,t,s} ]} \Delta Y_{i,t,s} \right] \\
	& = \bE\left[ \frac{\bE[M_{i,t,s,e} \mid X_i]}{ \bE[ M_{i,t,s,e} ]} \frac{S_{i,t,s}}{\bE[S_{i,t,s} \mid X_i]} \Delta Y_{i,t,s} \right] \\
	& = \bE\left[ \frac{ \bE[M_{i,t,s,e} \mid X_i] }{ \bE[ M_{i,t,s,e} ] } \bE[\Delta Y_{i,t,s} \mid S_{i,t,s} = 1, X_i ] \right] \\
	& = \bE\left[ w_{i,t,s,e}^{M} \cdot m_{t,s}(X_i) \right].
\end{align*}

\paragraph{The DR estimand.}
Using \eqref{eq:taumean}, we have
\begin{align*}
	\bE \left[ \left( w_{i,t,s,e}^{M} - w_{i,t,s,e}^{S} \right) m_{t,s}(X_i) \right]
	& = \frac{1}{ \bE[ M_{i,t,s,e} ] } \bE \left[ \big( M_{i,t,s,e} - r_{t,s,e}(X_i) S_{i,t,s} \big) m_{t,s}(X_i) \right] \\
	& = \frac{1}{\bE[M_{i,t,s,e}]} \bE \left[ \big( \bE[M_{i,t,s,e} \mid X_i] - r_{t,s,e}(X_i) \bE[ S_{i,t,s} \mid X_i ] \big) m_{t,s}(X_i) \right] \\
	& = \frac{1}{\bE[M_{i,t,s,e}]} \bE \left[ \big( \bE[M_{i,t,s,e} \mid X_i] - \bE[M_{i,t,s,e} \mid X_i] \big) m_{t,s}(X_i) \right] \\
	& = 0,
\end{align*}
where the last two lines follow from the law of iterated expectations and \eqref{eq:propensity}.
Thus, in conjunction with the second result, we obtain the desired result.
\qed 

\subsection{Proof of Theorem \ref{thm:DR}}

We focus on proving result (i) only, because result (ii) can be shown by the same arguments as in the proof of the third result of Theorem \ref{thm:ATE}.
Owing to the first result in Theorem \ref{thm:ATE}, it suffices to show that
\begin{align*}
	\bE \left[ w_{i,t,s,e}^{S}(\pi_{t,s,e}) \left( \Delta Y_{i,t,s} - m_{t,s}(X_i; \gamma_{t,s}^*) \right) \right]
	= 0
	\quad \text{for all $\pi_{t,s,e} \in \Pi_{t,s,e}$.}
\end{align*}
Fix arbitrary $\pi_{t,s,e} \in \Pi_{t,s,e}$. 
Using the law of iterated expectations, the left-hand side can be written as
\begin{align*}
	\bE \left[ \frac{ r_{t,s,e}(X_i; \pi_{t,s,e}) }{\bE [ r_{t,s,e}(X_i; \pi_{t,s,e}) S_{i,t,s} ] } \bE \left[ S_{i,t,s} \big( \Delta Y_{i,t,s} - m_{t,s}(X_i; \gamma_{t,s}^*) \big) \; \middle| \; X_i \right] \right].
\end{align*}
From the definition of $S_{i,t,s}$ and the correctly specified $m_{t,s}(\cdot) = m_{t,s}(\cdot; \gamma_{t,s}^*)$, it is easy to see that 
\begin{align*}
	\bE \left[ S_{i,t,s} \big( \Delta Y_{i,t,s} - m_{t,s}(X_i; \gamma_{t,s}^*) \big) \; \middle| \; X_i \right] 
	= \bE[ S_{i,t,s} \mid X_i ] \cdot m_{t,s}(X_i) - \bE[ S_{i,t,s} \mid X_i] \cdot m_{t,s}(X_i)
	= 0.
\end{align*}
This completes the proof.
\qed

\subsection{Proof of Theorem \ref{thm:testable}}

The same arguments as in the proof of Theorem \ref{thm:ATE} can show that
\begin{align*}
	\mathrm{ATEM}(t, s, r, e) 
	= \mathrm{ATEM}^{\mathrm{DR}}(t, s, r, e).
\end{align*}
Furthermore, by Assumptions \ref{as:effective}(ii) and \ref{as:nondecreasing}, it is straightforward to see that 
\begin{align*}
	\mathrm{ATEM}(t, s, r, e) 
	= \bE[ Y_{ir}^*(0) - Y_{ir}^*(0) \mid M_{i,t,s,e} = 1 ]
	= 0,
\end{align*}
which completes the proof.
\qed 

\section{Lemmas} \label{sec:lemma}

\begin{lemma} \label{lem:parallel}
	Under Assumption \ref{as:parallel},
	\begin{align*}
		\bE [ Y_{ir}^*(0) - Y_{i,r-1}^*(0) \mid M_{i,t,s,e}, X_i ] 
		= \bE[ Y_{ir}^*(0) - Y_{i,r-1}^*(0) \mid S_{i,t,s}, X_i ] 
		\quad \text{a.s.,}
	\end{align*}
	for any $e \in \tilde{\mathcal{E}}_t$ and $t, s, r \in \mathcal{T}$ such that $t > s$ and $r \ge 2$.
\end{lemma}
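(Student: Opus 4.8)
The plan is to exploit the fact that both the mover indicator $M_{i,t,s,e}$ and the stayer indicator $S_{i,t,s}$ are deterministic functions of the entire treatment path $\bm{D}_{iT}$, which lets me collapse each side of the asserted identity to one and the same $X_i$-measurable quantity by combining the law of iterated expectations with the parallel trends assumption.

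First I would record the key measurability observation. Since $E_{it} = E_t(\bm{D}_{iT})$ and $E_{is} = E_s(\bm{D}_{iT})$, both $M_{i,t,s,e} = \bm{1}\{ E_{it} = e, E_{is} = 0 \}$ and $S_{i,t,s} = \bm{1}\{ E_{it} = 0, E_{is} = 0 \}$ are measurable functions of $\bm{D}_{iT}$. Consequently the information set $(M_{i,t,s,e}, X_i)$, and likewise $(S_{i,t,s}, X_i)$, is coarser than $(\bm{D}_{iT}, X_i)$.

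Next I would apply the tower property to the left-hand side and invoke Assumption \ref{as:parallel} with period $r \ge 2$:
\[
	\bE[ Y_{ir}^*(0) - Y_{i,r-1}^*(0) \mid M_{i,t,s,e}, X_i ]
	= \bE\big[ \bE[ Y_{ir}^*(0) - Y_{i,r-1}^*(0) \mid \bm{D}_{iT}, X_i ] \mid M_{i,t,s,e}, X_i \big].
\]
By parallel trends the inner expectation equals $\bE[ Y_{ir}^*(0) - Y_{i,r-1}^*(0) \mid X_i ]$, which depends on $X_i$ alone; conditioning a $\sigma(X_i)$-measurable object on $(M_{i,t,s,e}, X_i)$ leaves it unchanged, so the left-hand side reduces to $\bE[ Y_{ir}^*(0) - Y_{i,r-1}^*(0) \mid X_i ]$.

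Running the identical argument with $S_{i,t,s}$ in place of $M_{i,t,s,e}$ shows that the right-hand side equals the same $X_i$-only expectation, so the two sides coincide a.s., which completes the proof. The only point demanding care---rather than a genuine obstacle---is the initial measurability step: the whole reduction hinges on $M_{i,t,s,e}$ and $S_{i,t,s}$ being functions of $\bm{D}_{iT}$, which is precisely what allows the parallel trends condition (stated with respect to the full path $\bm{D}_{iT}$) to be passed through the iterated expectation. No nontrivial computation is involved.
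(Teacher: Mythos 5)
Your proof is correct and follows essentially the same route as the paper's: both use the fact that $M_{i,t,s,e}$ and $S_{i,t,s}$ are functions of $\bm{D}_{iT}$ to apply the law of iterated expectations through $(\bm{D}_{iT}, X_i)$, then invoke Assumption \ref{as:parallel} to collapse each side to $\bE[ Y_{ir}^*(0) - Y_{i,r-1}^*(0) \mid X_i]$. No gaps.
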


\begin{proof}
	Using the law of iterated expectations, we have
	\begin{align*}
		\bE[ Y_{ir}^*(0) - Y_{i,r-1}^*(0) \mid M_{i,t,s,e}, X_i ]
		& = \bE \big[ \bE[ Y_{ir}^*(0) - Y_{i,r-1}^*(0) \mid \bm{D}_{iT}, X_i ] \; \big| \; M_{i,t,s,e}, X_i \big] \\
		& = \bE[ Y_{ir}^*(0) - Y_{i,r-1}^*(0) \mid X_i]
		\quad \text{a.s.,}
	\end{align*}
	where the first equality follows from the fact that $M_{i,t,s,e} = \bm{1}\{ E_{it} = e, E_{is} = 0 \}$ is determined from $\bm{D}_{iT}$ and the second equality follows from Assumption \ref{as:parallel}.
	The same argument shows that
	\begin{align*}
		\bE[ Y_{ir}^*(0) - Y_{i,r-1}^*(0) \mid S_{i,t,s}, X_i ] = \bE[ Y_{ir}^*(0) - Y_{i,r-1}^*(0) \mid X_i]
		\quad \text{a.s.}
	\end{align*}
\end{proof}

Let
\begin{align*}
	\mathrm{ATEM}_x(t, s, e)
	\coloneqq \bE[ Y_{it} - Y_{it}^*(0) \mid M_{i,t,s,e} = 1, X_i = x ].
\end{align*}

\begin{lemma} \label{lem:ATE}
	Under Assumptions \ref{as:effective}--\ref{as:parallel}, $\mathrm{ATEM}_x(t, s, e)$ for each $(t, s, e, x) \in \mathcal{A} \times \mathcal{X}$ is identified by
	\begin{align*}
		\mathrm{ATEM}_x(t, s, e) 
		= \bE[ \Delta Y_{i,t,s} \mid M_{i,t,s,e} = 1, X_i = x ] - \bE[ \Delta Y_{i,t,s} \mid S_{i,t,s} = 1, X_i = x ].
	\end{align*}
\end{lemma}

\begin{proof}
	Under Assumption \ref{as:overlap}, we have
	\begin{align*}
		& \bE[ \Delta Y_{i,t,s} \mid M_{i,t,s,e} = 1, X_i = x ] - \bE[ \Delta Y_{i,t,s} \mid S_{i,t,s} = 1, X_i = x ] \\
		& = \bE[ Y_{it} - Y_{is}^*(E_{is}^*) \mid M_{i,t,s,e} = 1, X_i = x ] - \bE[ Y_{it}^*(E_{it}^*) - Y_{is}^*(E_{is}^*) \mid S_{i,t,s} = 1, X_i = x ]\\
		& = \bE[ Y_{it} - Y_{is}^*(0) \mid M_{i,t,s,e} = 1, X_i = x ] - \bE[ Y_{it}^*(0) - Y_{is}^*(0) \mid S_{i,t,s} = 1, X_i = x ]\\
		& = \bE[ Y_{it} - Y_{it}^*(0) \mid M_{i,t,s,e} = 1, X_i = x ] \\
		& \quad + \bE[ Y_{it}^*(0) - Y_{is}^*(0) \mid M_{i,t,s,e} = 1, X_i = x ] - \bE[ Y_{it}^*(0) - Y_{is}^*(0) \mid S_{i,t,s} = 1, X_i = x ] \\
		& = \bE[ Y_{it} - Y_{it}^*(0) \mid M_{i,t,s,e} = 1, X_i = x ] \\
		& \quad + \sum_{r=s+1}^{t} \Big( \bE[ Y_{ir}^*(0) - Y_{i,r-1}^*(0) \mid M_{i,t,s,e} = 1, X_i = x ] - \bE[ Y_{ir}^*(0) - Y_{i,r-1}^*(0) \mid S_{i,t,s} = 1, X_i = x ] \Big) \\
		& = \mathrm{ATEM}_x(t, s, e),
	\end{align*}
	where we used Assumption \ref{as:effective}(ii) and Lemma \ref{lem:parallel} in the second and last equalities, respectively.
\end{proof}

\section{Additional Results} \label{sec:additional}

\subsection{Aggregation parameters} \label{subsec:aggregation}

Given Theorems \ref{thm:ATE} and \ref{thm:DR}, a number of aggregation parameters can be recovered as weighted averages of $\mathrm{ATEM}(t, s, e)$.
We discuss the specific aggregation schemes by continuing Examples \ref{ex:event} and \ref{ex:number}.

\begin{example}[continues = ex:event]
	Define
	\begin{align} \label{eq:aggregate1}
		\mathrm{ATEM}^{\mathrm{E}}(t)
		& \coloneqq \bE[ Y_{it} - Y_{it}^*(0) \mid M_{i,t,1}^{\mathrm{E}} = 1 ],
	\end{align}
	where $M_{i,t,1}^{\mathrm{E}} \coloneqq \bm{1}\{ E_{it}^{\mathrm{E}} \neq 0, E_{i1}^{\mathrm{E}} = 0 \}$.
	This is the ATE for those who were not treated in the first period but received some treatment intensity between periods 2 and $t$.
	From the law of iterated expectations and the fact that $E_{it}^{\mathrm{E}} = e$ implies that $E_{i,e-1}^{\mathrm{E}} = 0$, we can observe that
	\begin{align*}
		\mathrm{ATEM}^{\mathrm{E}}(t) 
		& = \bE \big[ \bE[ Y_{it} - Y_{it}^*(0) \mid E_{it}^{\mathrm{E}}, M_{i,t,1}^{\mathrm{E}} = 1 ] \; \big| \; M_{i,t,1}^{\mathrm{E}} = 1 \big] \\
		& = \sum_{e=2}^t \bE[ Y_{it} - Y_{it}^*(0) \mid M_{i,t,1,e}^{\mathrm{E}} = 1 ] \cdot \bP( M_{i,t,1,e}^{\mathrm{E}} = 1 \mid M_{i,t,1}^{\mathrm{E}} = 1 ) \\
		& = \sum_{e = 2}^t \mathrm{ATEM}^{\mathrm{E}}(t, e-1, e) \cdot \omega^{\mathrm{E}}(t, e-1, e),
	\end{align*}
	where $\omega^{\mathrm{E}}(t, e - 1, e) \coloneqq \bE[ M_{i,t,e - 1,e}^{\mathrm{E}} ] / \sum_{\tilde e=2}^t \bE[ M_{i,t,\tilde e-1,\tilde e}^{\mathrm{E}} ]$.
	Note that the weight function is identifiable from data and sums up to one.
\end{example}

\begin{example}[continues = ex:number]
	Define
	\begin{align} \label{eq:aggregate2}
		\mathrm{ATEM}^{\mathrm{N}}(t, 1) 
		\coloneqq \bE[ Y_{it} - Y_{it}^*(0) \mid M_{i,t,1}^{\mathrm{N}} = 1 ],
	\end{align}
	where $M_{i,t,1}^{\mathrm{N}} \coloneqq \bm{1}\{ E_{it}^{\mathrm{N}} \neq 0, E_{i1}^{\mathrm{N}} = 0 \}$ indicates whether unit $i$, who did not receive the treatment in the first period, received some treatment intensity until period $t$.
	Using the law of iterated expectations, we can see that this aggregation parameter can be recovered from
	\begin{align*}
		\mathrm{ATEM}^{\mathrm{N}}(t, 1)
		& = \bE \left[ \bE[ Y_{it} - Y_{it}^*(0) \mid E_{it}^{\mathrm{N}}, M_{i,t,1}^{\mathrm{N}} = 1 ] \mid M_{i,t,1}^{\mathrm{N}} = 1 \right] \\
		& = \sum_{e \in \tilde{\mathcal{E}}_t^{\mathrm{N}}} \mathrm{ATEM}^{\mathrm{N}}(t, 1, e) \cdot \bP( M_{i,t,1,e}^{\mathrm{N}} = 1 \mid M_{i,t,1}^{\mathrm{N}} = 1 ) \\
		& = \sum_{e \in \tilde{\mathcal{E}}_t^{\mathrm{N}}} \mathrm{ATEM}^{\mathrm{N}}(t, 1, e) \cdot \omega^{\mathrm{N}}(t,1,e),
	\end{align*}
	where $\tilde{\mathcal{E}}_t^{\mathrm{N}}$ denotes the set of non-zero realizations of $E_{it}^{\mathrm{N}}$ and $\omega^{\mathrm{N}}(t, s, e) \coloneqq \bE[ M_{i,t,s,e}^{\mathrm{N}} ] / \sum_{\tilde e \in \tilde{\mathcal{E}}_t^{\mathrm{N}}} \bE[ M_{i,t,s,\tilde e}^{\mathrm{N}} ]$ is an identifiable weight function that is non-negative and sums up to one.
\end{example}

The aggregation parameters in \eqref{eq:aggregate1} and \eqref{eq:aggregate2} are closely related to the ATEM parameter discussed in Example \ref{ex:once}.
Specifically, it is straightforward to see that
\begin{align*}
	\mathrm{ATEM}^{\mathrm{O}}(t, 1, 1)
	= \mathrm{ATEM}^{\mathrm{E}}(t)
	= \mathrm{ATEM}^{\mathrm{N}}(t, 1)
	\quad \text{for any $t \ge 2$.}
\end{align*}
Thus, we recommend that researchers obtain sensible aggregation parameters by estimating $\mathrm{ATEM}^{\mathrm{O}}(t, 1, 1)$ and its time-series average, that is,
\begin{align*}
	\mathrm{ATEM}^{\mathrm{A}}
	\coloneqq \frac{1}{T-1} \sum_{t=2}^T \mathrm{ATEM}^{\mathrm{O}}(t, 1, 1).
\end{align*}
We can perform the statistical inference for $\mathrm{ATEM}^{\mathrm{A}}$ by the multiplier bootstrap in a similar manner to Procedure \ref{algorithm:bootstrap}. 

\subsection{Asymptotic investigations} \label{subsec:asymptotic}

We study the asymptotic properties of the proposed ATEM estimator in \eqref{eq:DRestimator}.
We consider the asymptotic regime where $N$ grows to infinity and $T$ is fixed.

The following assumption comprises mild regularity conditions for the first-step estimation, which is similar to Assumption 7 of \cite{callaway2021difference}.
Let $g(x; \vartheta)$ denote a generic notation that indicates $m_{t,s}(x; \gamma_{t,s})$ or $p_{t,s,e}(x; \pi_{t,s,e})$ with $\vartheta = \gamma_{t,s}$ or $\vartheta = \pi_{t,s,e}$, respectively.
We write the Frobenius norm of a generic matrix $A$ as $\| A \| = \sqrt{\mathrm{tr}(A^{\top} A)}$.

\begin{assumption}[First-step estimation] \label{as:asymptotic}
	The following conditions are satisfied for each $(t, s, e) \in \mathcal{A}$.
	\begin{enumerate}[(i)]
		\item The parameter space of $\vartheta$, $\Theta$, is a compact set on $\mathbb{R}^{\dim(\vartheta)}$.
		\item $g(X_i; \vartheta)$ is a.s. continuous in $\vartheta$ on $\Theta$.
		\item There exists a unique pseudo-true parameter value $\vartheta^* \in \mathrm{int}(\Theta)$.
		\item There exists a neighborhood of $\vartheta^*$, $\Theta^* \subset \Theta$, on which $g(X_i; \vartheta)$ is a.s. twice continuously differentiable in $\vartheta$ with bounded derivatives.
		\item The first-step estimator $\hat \vartheta$ satisfies $\hat \vartheta \overset{a.s.}{\to} \vartheta^*$ and
		\begin{align*}
			\sqrt{N} (\hat \vartheta - \vartheta^*)
			= \frac{1}{\sqrt{N}} \sum_{i=1}^N \psi_i^{\vartheta}(\vartheta^*) + o_{\bP}(1),
		\end{align*}
		where $\psi_i^{\vartheta}(\vartheta^*)$ is a $\dim(\vartheta) \times 1$ vector of influence functions such that $\bE[\psi_i^{\vartheta}(\vartheta^*)] = 0$, $\bE[\psi_i^{\vartheta}(\vartheta^*) \psi_i^{\vartheta}(\vartheta^*)^\top]$ is positive definite, and $\lim_{s \to 0} \bE [ \sup_{\vartheta \in \Theta^*: \| \vartheta - \vartheta^* \| \le s} \| \psi_i^{\vartheta}(\vartheta) - \psi_i^{\vartheta}(\vartheta^*) \|^2 ] = 0$.
		\item There exists some $\varepsilon > 0$ such that for all $\pi_{t,s,e} \in \Pi_{t,s,e}$, $\varepsilon \le p_{t,s,e}(X_i; \pi_{t,s,e}) \le 1 - \varepsilon$ a.s.
	\end{enumerate}
\end{assumption}

To state the next assumption, we introduce the following notation.
Letting $\theta_{t,s,e} = (\gamma_{t,s}^\top, \pi_{t,s,e}^\top)^\top$ be a vector of the first-step parameters, define
\begin{align} \label{eq:influence}
	\psi_{i,t,s,e}(\theta_{t,s,e})
	\coloneqq \psi_{i,t,s,e}^{M}(\gamma_{t,s}) - \psi_{i,t,s,e}^{S}(\theta_{t,s,e}) - \psi_{i,t,s,e}^{\mathrm{est}}(\theta_{t,s,e}),
\end{align}
where
\begin{align*}
	\psi_{i,t,s,e}^{M}(\gamma_{t,s})
	& \coloneqq w_{i,t,s,e}^{M} (\Delta Y_{i,t,s} - m_{t,s}(X_i; \gamma_{t,s})) - w_{i,t,s,e}^{M} \cdot \bE \left[ w_{i,t,s,e}^{M} \big( \Delta Y_{i,t,s} - m_{t,s}(X_i; \gamma_{t,s}) \big) \right], \\
	\psi_{i,t,s,e}^{S}(\theta_{t,s,e})
	& \coloneqq w_{i,t,s,e}^{S}(\pi_{t,s,e}) \cdot \big( \Delta Y_{i,t,s} - m_{t,s}(X_i; \gamma_{t,s}) \big) \\
	& \quad - w_{i,t,s,e}^{S}(\pi_{t,s,e}) \cdot \bE \left[ w_{i,t,s,e}^{S}(\pi_{t,s,e}) \cdot \big( \Delta Y_{i,t,s} - m_{t,s}(X_i; \gamma_{t,s}) \big) \right], \\
	\psi_{i,t,s,e}^{\mathrm{est}}(\theta_{t,s,e})
	& \coloneqq \left[ \psi_{i,t,s}^{\gamma}(\gamma_{t,s}) \right]^\top M_{t,s,e}^{1}(\theta_{t,s,e}) + \left[ \psi_{i,t,s,e}^{\pi}(\pi_{t,s,e}) \right]^\top M_{t,s,e}^{2}(\theta_{t,s,e}),
\end{align*}
with
\begin{align*}
	M_{t,s,e}^{1}(\theta_{t,s,e})
	& \coloneqq \bE \left[ \big( w_{i,t,s,e}^{M} - w_{i,t,s,e}^{S}(\pi_{t,s,e}) \big) \cdot \dot{m}_{t,s}(X_i; \gamma_{t,s}) \right], \\
	M_{t,s,e}^{2}(\theta_{t,s,e})
	& \coloneqq \bE \left[ \dot{w}_{i,t,s,e}^{S}(\pi_{t,s,e}) \cdot \big( \Delta Y_{i,t,s} - m_{t,s}(X_i; \gamma_{t,s}) \big) \right] \\
	& \quad - \bE[ \dot{w}_{i,t,s,e}^{S}(\pi_{t,s,e}) ] \cdot \bE \left[ w_{i,t,s,e}^{S}(\pi_{t,s,e}) \cdot \big( \Delta Y_{i,t,s} - m_{t,s}(X_i; \gamma_{t,s}) \big) \right],
\end{align*}
and
\begin{align*}
	\dot{m}_{t,s}(X_i; \gamma_{t,s})
	& \coloneqq \frac{\partial m_{t,s}(X_i; \gamma_{t,s})}{\partial \gamma_{t,s}},
	& \dot{p}_{t,s,e}(X_i; \pi_{t,s,e})
	& \coloneqq \frac{\partial p_{t,s,e}(X_i; \pi_{t,s,e})}{\partial \pi_{t,s,e}}, \\
	\dot{w}_{i,t,s,e}^{S}(\pi_{t,s,e})
	& \coloneqq \frac{ \dot{r}_{t,s,e}(X_i; \pi_{t,s,e}) S_{i,t,s} }{ \bE [ r_{t,s,e}(X_i; \pi_{t,s,e}) S_{i,t,s} ] },
	& \dot{r}_{t,s,e}(X_i; \pi_{t,s,e})
	& \coloneqq \frac{\dot{p}_{t,s,e}(X_i; \pi_{t,s,e})}{[ 1 - p_{t,s,e}(X_i; \pi_{t,s,e}) ]^2}.
\end{align*}
Here, it is easy to see that $\bE[ \psi_{i,t,s,e}(\theta_{t,s,e}^*) ] = 0$.
Also note that $\psi_{i,t,s,e}^{\mathrm{est}}$ arises from the fact that we estimate the unknown $\theta_{t,s,e}$ in the first-stage parametric estimation.

\begin{assumption}[Moments] \label{as:moment}
	\hfil 
	\begin{enumerate}[(i)]
		\item $\bE[ \Delta Y_{i,t,s}^2 ] < \infty$ for each $(t, s)$ such that $(t, s, e) \in \mathcal{A}$ with some $e \in \tilde{\mathcal{E}}_t$.
		\item $\bE[ (\psi_{i,t,s,e}(\theta_{t,s,e}^*))^2 ] < \infty$ for each $(t, s, e) \in \mathcal{A}$.
	\end{enumerate}
\end{assumption}

The next theorem presents the joint asymptotic normality result for $\widehat{\mathrm{ATEM}}(t, s, e)$ defined in \eqref{eq:DRestimator}.
The proof is relegated to the end of this section.
Let $\mathrm{ATEM}_{\mathcal{A}}$ and $\widehat{\mathrm{ATEM}}_{\mathcal{A}}$ denote the vectors stacking $\mathrm{ATEM}(t, s, e)$ and $\widehat{\mathrm{ATEM}}(t, s, e)$, respectively, over $(t, s, e) \in \mathcal{A}$.
Similarly, for each $i \in \mathcal{N}$, let $\psi_{i, \mathcal{A}}(\theta^*)$ be a vector that collects $\psi_{i,t,s,e}(\theta_{t,s,e}^*)$ over $(t, s, e) \in \mathcal{A}$.

\begin{theorem} \label{thm:asymptotic}
	Suppose that Assumptions \ref{as:effective}--\ref{as:correct}, \ref{as:DGP}, and \ref{as:asymptotic}--\ref{as:moment} hold.
	When $N \to \infty$, we have the following influence function representation:
	\begin{align*}
		\sqrt{N} \left( \widehat{\mathrm{ATEM}}(t, s, e) - \mathrm{ATEM}(t, s, e) \right)
		= \frac{1}{\sqrt{N}} \sum_{i=1}^N \psi_{i,t,s,e}( \theta_{t,s,e}^*) + o_{\bP}(1)
		\quad \text{for each $(t, s, e) \in \mathcal{A}$}.
	\end{align*}
	As a result, we have
	\begin{align*}
		\sqrt{N} \left( \widehat{\mathrm{ATEM}}_{\mathcal{A}} - \mathrm{ATEM}_{\mathcal{A}} \right)
		\overset{d}{\to}
		\mathrm{Normal} \left( 0, \Sigma_{\mathcal{A}} \right),
	\end{align*}
	where $\Sigma_{\mathcal{A}} \coloneqq \bE[ \psi_{i,\mathcal{A}}(\theta^*) \psi_{i,\mathcal{A}}(\theta^*)^\top ]$ denotes the asymptotic covariance matrix.
\end{theorem}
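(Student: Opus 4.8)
The plan is to treat $\widehat{\mathrm{ATEM}}(t,s,e)$ as a two-step $M$-estimator and obtain the stated influence-function representation through a first-order expansion that isolates two sources of sampling error: the replacement of the population mean $\bE$ by the empirical mean $\bE_N$ at the pseudo-true value $\theta_{t,s,e}^*$, and the first-step estimation error $\sqrt{N}(\hat\theta_{t,s,e}-\theta_{t,s,e}^*)$. Writing $\widehat{\mathrm{ATEM}}(t,s,e;\theta)$ for the sample DR objective in \eqref{eq:DRestimator} evaluated at a generic $\theta=(\gamma^\top,\pi^\top)^\top$, with the self-normalized (Hájek-type) weights $\hat w_{i,t,s,e}^{M}$ and $\hat w_{i,t,s,e}^{S}(\pi)$, I would first establish that $\hat\theta_{t,s,e}\overset{p}{\to}\theta_{t,s,e}^*$ and that $\widehat{\mathrm{ATEM}}(t,s,e;\theta)\to\mathrm{ATEM}^{\mathrm{DR}}(t,s,e;\gamma,\pi)$ uniformly over $\Theta$, using the compactness and continuity conditions in Assumption \ref{as:asymptotic}(i)--(ii), the overlap bound in Assumption \ref{as:asymptotic}(vi), and a uniform law of large numbers. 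The key structural input is Theorem \ref{thm:DR}: under Assumption \ref{as:correct} the population objective at $\theta_{t,s,e}^*$ equals $\mathrm{ATEM}(t,s,e)$ even when one of the two parametric models is misspecified, so the expansion is correctly centered.

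For the sampling term at fixed $\theta_{t,s,e}^*$, I would linearize the self-normalized weights by expanding their empirical denominators $\bE_N[M_{i,t,s,e}]$ and $\bE_N[r_{t,s,e}(X_i;\pi^*)S_{i,t,s}]$ around their population counterparts; a first-order Taylor expansion of $x\mapsto 1/x$, together with \eqref{eq:taumean}, produces exactly the demeaning corrections appearing in $\psi_{i,t,s,e}^{M}$ and $\psi_{i,t,s,e}^{S}$ in \eqref{eq:influence}. For the first-step term, I would apply a mean-value expansion of the second-step objective in $\theta$ around $\theta_{t,s,e}^*$; by Assumption \ref{as:asymptotic}(iv) the relevant derivatives exist and can be evaluated at $\theta_{t,s,e}^*$, yielding the Jacobian blocks $M_{t,s,e}^{1}$ and $M_{t,s,e}^{2}$, and substituting the influence-function representation of $\hat\theta_{t,s,e}$ from Assumption \ref{as:asymptotic}(v) delivers $\psi_{i,t,s,e}^{\mathrm{est}}$. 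Collecting the three pieces reproduces $\psi_{i,t,s,e}(\theta_{t,s,e}^*)$ as defined in \eqref{eq:influence}.

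The main obstacle will be controlling the higher-order remainders of these two expansions and showing that the cross term between the sampling and estimation errors is $o_{\bP}(1)$. Here I would rely on the boundedness of the GPS away from $0$ and $1$ (Assumptions \ref{as:overlap} and \ref{as:asymptotic}(vi)), which keeps $r_{t,s,e}(\cdot;\pi)$ and its derivative uniformly bounded on the neighborhood $\Theta^*$, together with the twice-differentiability with bounded derivatives in Assumption \ref{as:asymptotic}(iv) and the square-integrability in Assumption \ref{as:moment}. A standard stochastic-equicontinuity argument, exploiting the $L^2$-continuity of $\psi_i^{\vartheta}$ assumed in Assumption \ref{as:asymptotic}(v), then renders the remainder negligible. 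Finally, since the panel is IID across $i$ (Assumption \ref{as:DGP}), $T$ is fixed, and $\mathcal{A}$ is finite, I would stack the per-$(t,s,e)$ representations and apply the Lindeberg--Lévy central limit theorem, using the finite second moment in Assumption \ref{as:moment}(ii), to obtain the joint asymptotic normality with covariance $\Sigma_{\mathcal{A}}=\bE[\psi_{i,\mathcal{A}}(\theta^*)\psi_{i,\mathcal{A}}(\theta^*)^\top]$.
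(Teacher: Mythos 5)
Your proposal is correct and follows essentially the same route as the paper's proof: the paper likewise linearizes the self-normalized (H\'ajek) weights by a first-order Taylor expansion of the reciprocal empirical denominators around $\bE[M_{i,t,s,e}]$ and $\bE[r_{t,s,e}(X_i;\pi^*)S_{i,t,s}]$ (which produces the demeaning terms in $\psi^{M}_{i,t,s,e}$ and $\psi^{S}_{i,t,s,e}$), expands in the first-step parameters to obtain the Jacobian blocks $M^{1}_{t,s,e}$ and $M^{2}_{t,s,e}$ contracted with the first-step influence functions, controls remainders at the $O_{\bP}(N^{-1})$ order using the overlap and moment conditions, and concludes with the Lindeberg--L\'evy CLT over the finite set $\mathcal{A}$. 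The only cosmetic difference is that the paper organizes the expansion term by term across the four products $\hat A_1,\dots,\hat A_4$ rather than expanding the objective as a whole.
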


This theorem implies that the asymptotic variance of $\hat{\mathrm{ATEM}}(t, s, e)$ is given by $\bE[(\psi_{i,t,s,e}(\theta_{t,s,e}^*))^2] / N$, which depends critically on the inverse of $\bE[M_{i,t,s,e}]$.
Thus, a more precise ATEM estimation can be generally achieved with a coarser effective treatment specification, such as the once specification, as a coarser one can generally induce more movers between two periods.

\subsection{Proof of Theorem \ref{thm:asymptotic}}

Observe that 
\begin{align*}
	\widehat{\mathrm{ATEM}}(t, s, e) - \mathrm{ATEM}(t, s, e)
	& = \left( {\bE}_N[ \hat w_{i,t,s,e}^{M} \cdot \Delta Y_{i,t,s} ] - {\bE}[ w_{i,t,s,e}^{M} \cdot \Delta Y_{i,t,s} ] \right) \\
	& \quad - \left( {\bE}_N[ \hat w_{i,t,s,e}^{S}(\hat \pi_{t,s,e}) \cdot \Delta Y_{i,t,s} ] - {\bE}[ w_{i,t,s,e}^{S}(\pi_{t,s,e}^*) \cdot \Delta Y_{i,t,s} ] \right) \\
	& \quad - \left( {\bE}_N[ \hat w_{i,t,s,e}^{M} \cdot m_{t,s}(X_i; \hat \gamma_{t,s}) ] - {\bE}[ w_{i,t,s,e}^{M} \cdot m_{t,s}(X_i; \gamma_{t,s}^*) ] \right) \\
	& \quad + \left( {\bE}_N[ \hat w_{i,t,s,e}^{S}(\hat \pi_{t,s,e}) \cdot m_{t,s}(X_i; \hat \gamma_{t,s}) ] - {\bE}[ w_{i,t,s,e}^{S}(\pi_{t,s,e}^*) \cdot m_{t,s}(X_i; \gamma_{t,s}^*) ] \right) \\
	& \eqqcolon (\hat A_1 - A_1) - (\hat A_2 - A_2) - (\hat A_3 - A_3) + (\hat A_4 - A_4).
\end{align*}

For $\hat A_1$, we first note that 
\begin{align*}
	& {\bE}_N[M_{i,t,s,e}] - {\bE}[M_{i,t,s,e}] = O_{\bP}(N^{-1/2}), \\
	& {\bE}_N \left[ \frac{M_{i,t,s,e}}{{\bE}[M_{i,t,s,e}]^2} \Delta Y_{i,t,s} \right] - \bE \left[ \frac{M_{i,t,s,e}}{{\bE}[M_{i,t,s,e}]^2} \Delta Y_{i,t,s} \right] = O_{\bP}(N^{-1/2}),
\end{align*}
by the Lindeberg--L\'evy central limit theorem (CLT).
Applying Taylor's theorem to $1 / {\bE}_N[M_{i,t,s,e}]$ in $\hat A_1$ around $\bE[M_{i,t,s,e}]$ and using the above equations, we have
\begin{align*}
	\hat A_1
	& = {\bE}_N \left[ \frac{M_{i,t,s,e}}{{\bE}_N[M_{i,t,s,e}]} \Delta Y_{i,t,s} \right] \\
	& = {\bE}_N \left[ \frac{M_{i,t,s,e}}{{\bE}[M_{i,t,s,e}]} \Delta Y_{i,t,s} \right] \\
	& \quad - ({\bE}_N[M_{i,t,s,e}] - {\bE}[M_{i,t,s,e}]) \cdot {\bE}_N \left[ \frac{M_{i,t,s,e}}{{\bE}[M_{i,t,s,e}]^2} \Delta Y_{i,t,s} \right] + O_{\bP} \left( ({\bE}_N[M_{i,t,s,e}] - {\bE}[M_{i,t,s,e}])^2 \right) \\
	& = {\bE}_N \left[ w_{i,t,s,e}^{M} \Delta Y_{i,t,s} \right] - \frac{ {\bE}_N[M_{i,t,s,e}] - {\bE}[M_{i,t,s,e}] }{ \bE[M_{i,t,s,e}] } {\bE} \left[ w_{i,t,s,e}^{M} \Delta Y_{i,t,s} \right] + O_{\bP}(N^{-1}) \\
	& = A_1 + {\bE}_N \left[ w_{i,t,s,e}^{M} \Delta Y_{i,t,s} \right] - {\bE}_N \left[ w_{i,t,s,e}^{M} \right] \bE \left[ w_{i,t,s,e}^{M} \Delta Y_{i,t,s} \right] + O_{\bP}(N^{-1}),
\end{align*}
where we used Assumptions \ref{as:overlap} and \ref{as:DGP} in the second equality to evaluate the remainder term.
Thus, 
\begin{align*}
	\hat A_1 - A_1
	= {\bE}_N \left[ w_{i,t,s,e}^{M} \left( \Delta Y_{i,t,s} - \bE[w_{i,t,s,e}^{M} \Delta Y_{i,t,s}] \right) \right] + O_{\bP}(N^{-1}).
\end{align*}

Similarly, we can obtain the following results:
\begin{align*}
	\hat A_2 - A_2
	& = {\bE}_N \left[ w_{i,t,s,e}^{S}(\pi_{t,s,e}^*) \left( \Delta Y_{i,t,s} - \bE\left[ w_{i,t,s,e}^{S}(\pi_{t,s,e}^*) \Delta Y_{i,t,s} \right] \right) \right] \\
	& \quad + {\bE}_N\left[ \psi_{i,t,s,e}^{\pi}(\pi_{t,s,e}^*) \right]^\top \bE \left[ \dot{w}_{i,t,s,e}^{S}(\pi_{t,s,e}^*) \left( \Delta Y_{i,t,s} - \bE\left[ w_{i,t,s,e}^{S}(\pi_{t,s,e}^*) \Delta Y_{i,t,s} \right] \right) \right] + O_{\bP}(N^{-1}),
\end{align*}
and
\begin{align*}
	\hat A_3 - A_3
	& = {\bE}_N \left[ w_{i,t,s,e}^{M} \left( m_{t,s}(X_i; \gamma_{t,s}^*) - \bE \left[ w_{i,t,s,e}^{M} m_{t,s}(X_i; \gamma_{t,s}^*) \right] \right) \right] \\
	& \quad + {\bE}_N \left[ \psi_{t,s}^{\gamma}(\gamma_{t,s}^*) \right]^\top \bE \left[ w_{i,t,s,e}^{M} \dot{m}_{t,s}(X_i; \gamma_{t,s}^*) \right] + O_{\bP}(N^{-1}),
\end{align*}
and
\begin{align*}
	\hat A_4 - A_4
	& = {\bE}_N \left[ w_{i,t,s,e}^{S}(\pi_{t,s,e}^*) \left( m_{t,s}(X_i; \gamma_{t,s}^*) - \bE \left[ w_{i,t,s,e}^{S}(\pi_{t,s,e}^*) m_{t,s}(X_i; \gamma_{t,s}^*) \right] \right) \right] \\
	& \quad + {\bE}_N \left[ \psi_{i,t,s,e}^{\pi}(\pi_{t,s,e}^*) \right]^\top \bE \left[ \dot{w}_{i,t,s,e}^{S}(\pi_{t,s,e}^*) \left( m_{t,s}(X_i; \gamma_{t,s}^*) + \bE \left[ w_{i,t,s,e}^{S}(\pi_{t,s,e}^*) m_{t,s}(X_i; \gamma_{t,s}^*) \right] \right) \right] \\
	& \quad + {\bE}_N \left[ \psi_{t,s}^{\gamma}(\gamma_{t,s}^*) \right]^\top \bE \left[ w_{i,t,s,e}^{S}(\pi_{t,s,e}^*) \dot{m}_{t,s}(X_i; \gamma_{t,s}^*) \right] + O_{\bP}(N^{-1}).
\end{align*}

Consequently, we obtain
\begin{align*}
	& \widehat{\mathrm{ATEM}}(t, s, e) - \mathrm{ATEM}(t, s, e) \\
	& = {\bE}_N \left[ w_{i,t,s,e}^{M} \left( \Delta Y_{i,t,s} - m_{t,s}(X_i; \gamma_{t,s}^*) - \bE \left[ w_{i,t,s,e}^{M} \left( \Delta Y_{i,t,s} - m_{t,s}(X_i; \gamma_{t,s}^*) \right) \right] \right) \right] \\
	& \quad - {\bE}_N \left[ w_{i,t,s,e}^{S}(\pi_{t,s,e}^*) \left( \Delta Y_{i,t,s} - m_{t,s}(X_i; \gamma_{t,s}^*) - \bE \left[ w_{i,t,s,e}^{S}(\pi_{t,s,e}^*) \left( \Delta Y_{i,t,s} - m_{t,s}(X_i; \gamma_{t,s}^*) \right) \right] \right) \right] \\
	& \quad - {\bE}_N \left[ \psi_{t,s}^{\gamma}(\gamma_{t,s}^*) \right]^\top \bE \left[ \left( w_{i,t,s,e}^{M} - w_{i,t,s,e}^{S}(\pi_{t,s,e}^*) \right) \dot{m}_{t,s}(X_i; \gamma_{t,s}^*) \right] \\
	& \quad - {\bE}_N \left[ \psi_{i,t,s,e}^{\pi}(\pi_{t,s,e}^*) \right]^\top \bE \left[ \dot{w}_{i,t,s,e}^{S}(\pi_{t,s,e}^*) \left( \Delta Y_{i,t,s} - m_{t,s}(X_i; \gamma_{t,s}^*) - \bE[ w_{i,t,s,e}^{S}(\pi_{t,s,e}^*) (\Delta Y_{i,t,s} - m_{t,s}(X_i; \gamma_{t,s}^*))] \right) \right] \\
	& \quad + O_{\bP}(N^{-1}) \\
	& = {\bE}_N \left[ \psi_{i,t,s,e}^{M}(\gamma_{t,s}^*) - \psi_{i,t,s,e}^{S}(\theta_{t,s,e}^*) - \psi_{i,t,s,e}^{\mathrm{est}}(\theta_{t,s,e}^*) \right] + O_{\bP}(N^{-1}) \\
	& = {\bE}_N [\psi_{i,t,s,e}(\theta_{t,s,e}^*)] + O_{\bP}(N^{-1}),
\end{align*}
which completes the proof of asymptotic linear representation.
Then, using the CLT, it is easy to obtain the joint asymptotic normality result for $\sqrt{N}(\widehat{\mathrm{ATEM}}_{\mathcal{A}} - \mathrm{ATEM}_{\mathcal{A}})$.
\qed 

\section{Monte Carlo Experiment} \label{sec:simulation}

We evaluate the finite-sample performance of the proposed methods through Monte Carlo simulations.

Setting $D_{i1} = 0$ for all $i$, we generate a time-varying binary treatment $D_{it} = \bm{1} \{ \pi_1 + X_i \pi_2 + \alpha_i + \lambda_t \ge u_{it} \}$ for each $t \ge 2$, where the unit-specific covariate $X_i$, unit FE $\alpha_i$, and idiosyncratic error term $u_{it}$ are mutually independent standard normal variables, and $\lambda_t = t / T$ is the non-stochastic time effect.
We set the untreated potential outcome equation as $Y_{it}^*(0) = X_i \gamma_t + \alpha_i + \eta_t + v_{it}$ with the standard normal $v_{it}$ and another time effect $\eta_t = t$.
The observed outcome is generated by $Y_{it} = Y_{it}^*(0) + \sum_{e=2}^t \tau_{t,e} \cdot \bm{1}\{ E_{it}^{\mathrm{E}} = e \} + \xi_{it}$, where $\xi_{it} \sim \mathrm{Normal}(0, 1)$.
The true effective treatment is the event specification in \eqref{eq:event}, and the true ATEM corresponds to $\mathrm{ATEM}^{\mathrm{E}}(t, e-1, e) = \tau_{t,e}$.
The coefficient parameters are set to $\pi_1 = -1$, $\pi_2 = 1$, $\gamma_t = t$, and $\tau_{t,e} = (t + T - e) / T$.
We consider three patterns of sample sizes $(N, T) \in \{ (250, 4), (1000, 4), (4000, 4) \}$.

In the first-step estimation, we estimate the OR function $m_{t,s}(X_i)$ by regressing $\Delta Y_{i,t,s}$ on the constant and $X_i$ using observations satisfying $S_{i,t,s} = 1$.
In this simulation setup, $m_{t,s}(X_i) = \gamma_{1,t,s} + X_i \gamma_{2,t,s}$ with $\gamma_{1,t,s} = \eta_t - \eta_s$ and $\gamma_{2,t,s} = \gamma_t - \gamma_s$; thus, the OR function is correctly specified.
Meanwhile, we estimate the GPS using logit estimation, implying that the specification of GPS is incorrect.
In the second-step estimation, we estimate the three ATEM parameters considered in the empirical illustration.
For the multiplier bootstrap inference, we use \citeauthor{mammen1993bootstrap}'s (\citeyear{mammen1993bootstrap}) weight, and the number of bootstrap replications is set to 5,000.

Tables \ref{table:once}, \ref{table:event}, and \ref{table:number} present the simulation results obtained with 10,000 simulation repetitions. 
Each table reports the bias, root mean squared error (RMSE), point-wise coverage probability (PW.CP) and uniform coverage probability (U.CP) of the 95\% UCB, and average length of confidence interval (CI.L) for each effective treatment specification.

The bias is satisfactorily small even when $N$ is small.
The RMSE is approximately halved as $N$ increases from 250 to 1,000 or from 1,000 to 4,000, as expected from the $1/\sqrt{N}$ consistency of the ATEM estimator.
The PW.CP tends to be larger than the nominal level of 0.95, whereas the U.CP is closer to the desired level, which is a common feature of the uniform inference.
As expected, the CI.L becomes shorter as $N$ increases.
Overall, the RMSE and U.CP for the once specification are more desirable than those for the others, due to the fact that the once specification can induce more movers than the others.

\clearpage

\begin{table}[p]
	\caption{Monte Carlo simulation results: The once specification $\mathrm{ATEM}^{\mathrm{O}}(t, s, e)$} \label{table:once}
	\centering
	\begin{tabular}{rrrrrrrrrr}
		\toprule
		\multicolumn{1}{c}{$N$} & \multicolumn{1}{c}{$T$} & \multicolumn{1}{c}{$t$} & \multicolumn{1}{c}{$s$} & \multicolumn{1}{c}{$e$} & \multicolumn{1}{c}{Bias} & \multicolumn{1}{c}{RMSE} & \multicolumn{1}{c}{PW.CP} & \multicolumn{1}{c}{U.CP} & \multicolumn{1}{c}{CI.L}\\
		\midrule
		250 & 4 & 2 & 1 & 1 & -0.004 & 0.343 & 0.975 & 0.905 & 1.541\\
		250 & 4 & 3 & 1 & 1 & -0.008 & 0.383 & 0.965 & 0.905 & 1.652\\
		250 & 4 & 4 & 1 & 1 & 0.001 & 0.433 & 0.958 & 0.905 & 1.820\\
		\midrule
		1000 & 4 & 2 & 1 & 1 & -0.002 & 0.173 & 0.979 & 0.936 & 0.798\\
		1000 & 4 & 3 & 1 & 1 & -0.002 & 0.191 & 0.978 & 0.936 & 0.878\\
		1000 & 4 & 4 & 1 & 1 & -0.001 & 0.219 & 0.977 & 0.936 & 0.989\\
		\midrule
		4000 & 4 & 2 & 1 & 1 & -0.001 & 0.086 & 0.982 & 0.948 & 0.403\\
		4000 & 4 & 3 & 1 & 1 & -0.002 & 0.096 & 0.982 & 0.948 & 0.449\\
		4000 & 4 & 4 & 1 & 1 & -0.002 & 0.108 & 0.981 & 0.948 & 0.511\\
		\bottomrule
	\end{tabular}
\end{table}

\clearpage

\begin{table}[p]
	\caption{Monte Carlo simulation results: The event specification $\mathrm{ATEM}^{\mathrm{E}}(t, s, e)$} \label{table:event}
	\centering
	\begin{tabular}{rrrrrrrrrr}
		\toprule
		\multicolumn{1}{c}{$N$} & \multicolumn{1}{c}{$T$} & \multicolumn{1}{c}{$t$} & \multicolumn{1}{c}{$s$} & \multicolumn{1}{c}{$e$} & \multicolumn{1}{c}{Bias} & \multicolumn{1}{c}{RMSE} & \multicolumn{1}{c}{PW.CP} & \multicolumn{1}{c}{U.CP} & \multicolumn{1}{c}{CI.L}\\
		\midrule
		250 & 4 & 2 & 1 & 2 & -0.004 & 0.343 & 0.989 & 0.888 & 1.775\\
		250 & 4 & 3 & 1 & 2 & -0.011 & 0.445 & 0.981 & 0.888 & 2.143\\
		250 & 4 & 4 & 1 & 2 & 0.000 & 0.547 & 0.971 & 0.888 & 2.478\\
		250 & 4 & 3 & 2 & 3 & -0.002 & 0.413 & 0.990 & 0.888 & 2.172\\
		250 & 4 & 4 & 2 & 3 & 0.004 & 0.466 & 0.988 & 0.888 & 2.400\\
		250 & 4 & 4 & 3 & 4 & 0.000 & 0.493 & 0.988 & 0.888 & 2.566\\
		\midrule
		1000 & 4 & 2 & 1 & 2 & -0.002 & 0.173 & 0.993 & 0.931 & 0.924\\
		1000 & 4 & 3 & 1 & 2 & -0.003 & 0.224 & 0.991 & 0.931 & 1.177\\
		1000 & 4 & 4 & 1 & 2 & -0.001 & 0.289 & 0.989 & 0.931 & 1.438\\
		1000 & 4 & 3 & 2 & 3 & -0.001 & 0.204 & 0.992 & 0.931 & 1.101\\
		1000 & 4 & 4 & 2 & 3 & -0.002 & 0.231 & 0.991 & 0.931 & 1.239\\
		1000 & 4 & 4 & 3 & 4 & 0.001 & 0.239 & 0.992 & 0.931 & 1.297\\
		\midrule
		4000 & 4 & 2 & 1 & 2 & -0.001 & 0.086 & 0.993 & 0.941 & 0.467\\
		4000 & 4 & 3 & 1 & 2 & -0.002 & 0.114 & 0.994 & 0.941 & 0.611\\
		4000 & 4 & 4 & 1 & 2 & -0.002 & 0.144 & 0.993 & 0.941 & 0.775\\
		4000 & 4 & 3 & 2 & 3 & 0.001 & 0.103 & 0.992 & 0.941 & 0.554\\
		4000 & 4 & 4 & 2 & 3 & 0.000 & 0.115 & 0.993 & 0.941 & 0.627\\
		4000 & 4 & 4 & 3 & 4 & -0.004 & 0.119 & 0.992 & 0.941 & 0.651\\
		\bottomrule
	\end{tabular}
\end{table}

\clearpage

\begin{table}[p]
	\caption{Monte Carlo simulation results: The number specification $\mathrm{ATEM}^{\mathrm{N}}(t, s, e)$} \label{table:number}
	\centering
	\begin{tabular}{rrrrrrrrrr}
		\toprule
		\multicolumn{1}{c}{$N$} & \multicolumn{1}{c}{$T$} & \multicolumn{1}{c}{$t$} & \multicolumn{1}{c}{$s$} & \multicolumn{1}{c}{$e$} & \multicolumn{1}{c}{Bias} & \multicolumn{1}{c}{RMSE} & \multicolumn{1}{c}{PW.CP} & \multicolumn{1}{c}{U.CP} & \multicolumn{1}{c}{CI.L}\\
		\midrule
		250 & 4 & 2 & 1 & 1 & -0.004 & 0.343 & 0.987 & 0.874 & 1.688\\
		250 & 4 & 3 & 1 & 1 & -0.004 & 0.359 & 0.985 & 0.874 & 1.794\\
		250 & 4 & 4 & 1 & 1 & 0.002 & 0.393 & 0.986 & 0.874 & 1.960\\
		250 & 4 & 3 & 1 & 2 & -0.012 & 0.538 & 0.967 & 0.874 & 2.332\\
		250 & 4 & 4 & 1 & 2 & -0.002 & 0.506 & 0.977 & 0.874 & 2.329\\
		250 & 4 & 4 & 1 & 3 & 0.006 & 0.712 & 0.943 & 0.874 & 2.770\\
		\midrule
		1000 & 4 & 2 & 1 & 1 & -0.002 & 0.173 & 0.988 & 0.923 & 0.871\\
		1000 & 4 & 3 & 1 & 1 & -0.001 & 0.178 & 0.987 & 0.923 & 0.907\\
		1000 & 4 & 4 & 1 & 1 & 0.001 & 0.193 & 0.991 & 0.923 & 0.987\\
		1000 & 4 & 3 & 1 & 2 & -0.003 & 0.284 & 0.985 & 0.923 & 1.337\\
		1000 & 4 & 4 & 1 & 2 & -0.003 & 0.252 & 0.985 & 0.923 & 1.242\\
		1000 & 4 & 4 & 1 & 3 & -0.002 & 0.409 & 0.974 & 0.923 & 1.739\\
		\midrule
		4000 & 4 & 2 & 1 & 1 & -0.001 & 0.086 & 0.989 & 0.942 & 0.439\\
		4000 & 4 & 3 & 1 & 1 & -0.001 & 0.089 & 0.991 & 0.942 & 0.455\\
		4000 & 4 & 4 & 1 & 1 & -0.002 & 0.096 & 0.990 & 0.942 & 0.495\\
		4000 & 4 & 3 & 1 & 2 & -0.003 & 0.147 & 0.988 & 0.942 & 0.723\\
		4000 & 4 & 4 & 1 & 2 & -0.002 & 0.125 & 0.990 & 0.942 & 0.640\\
		4000 & 4 & 4 & 1 & 3 & -0.002 & 0.219 & 0.987 & 0.942 & 1.036\\
		\bottomrule
	\end{tabular}
\end{table}

\end{document}